\documentclass[reprint, aps, prx, tightenlines, notitlepage, footinbib,longbibliography,floatfix,nobalancelastpage, superscriptaddress]{revtex4-2}

\usepackage{cmap}
\usepackage[utf8]{inputenc}
\usepackage{float}
\usepackage{graphicx}  
\usepackage{physics}
\usepackage{esint}
\usepackage{amssymb}  
\usepackage{mathtools}
\usepackage{amsmath}
\usepackage{amsthm}
\usepackage{mathrsfs}
\usepackage{bm}
\usepackage{CJKutf8}

\usepackage[dvipsnames]{color,xcolor,colortbl}
\usepackage{hyperref}
\hypersetup{%
    colorlinks=true,
    linkcolor=NavyBlue,
    citecolor=NavyBlue,
    urlcolor=Bittersweet,
    plainpages=false
    }

\newtheorem{theorem}{Theorem}

\newtheorem*{result*}{Result}
\newtheorem{proposition}{Proposition}
\newtheorem{lemma}{Lemma}

\newtheorem*{corollary*}{Corollary}

\newtheorem*{claim*}{Claim}

\begin{document}

%\preprint{APS/123-QED}

\begin{CJK}{UTF8}{gbsn}
\title{
Correlated Noise Estimation with Quantum Sensor Networks
}

\author{Anthony J. Brady}\email{ajbrad4123@gmail.com}
\affiliation{Joint Center for Quantum Information and Computer Science, NIST/University of Maryland, College Park, MD, 20742, USA}
\affiliation{Joint Quantum Institute, NIST/University of Maryland, College Park, MD, 20742, USA}

\author{Yu-Xin Wang (王语馨)}

\affiliation{Joint Center for Quantum Information and Computer Science, NIST/University of Maryland, College Park, MD, 20742, USA}

\author{Victor V. Albert}

\affiliation{Joint Center for Quantum Information and Computer Science, NIST/University of Maryland, College Park, MD, 20742, USA}

\author{Alexey V. Gorshkov}

\affiliation{Joint Center for Quantum Information and Computer Science, NIST/University of Maryland, College Park, MD, 20742, USA}
\affiliation{Joint Quantum Institute, NIST/University of Maryland, College Park, MD, 20742, USA}

\author{Quntao Zhuang}
\affiliation{Ming Hsieh Department of Electrical and Computer Engineering,
University of Southern California, Los Angeles, CA 90089, USA}
\affiliation{Department of Physics and Astronomy, University of Southern California, Los Angeles, CA 90089, USA}

\begin{abstract}
    %In this Letter, 
    We address the metrological problem of estimating collective stochastic properties imprinted on a network of quantum sensors. Canonical examples include center-of-mass quadrature fluctuations in a system of bosonic modes and correlated dephasing in an ensemble of qubits (e.g., spins), bosons, or fermions. We develop a theoretical framework to determine the limits of correlated (weak) noise estimation with quantum sensor networks and reveal the requirements for entanglement advantage. Notably, an advantage emerges from the synergistic interplay between quantum correlations of the sensors and ``classical'' correlations of the noises. We determine optimal entangled probe states and identify a sensing protocol---reminiscent of a many-body echo---that achieves the fundamental limits of measurement sensitivity for a broad class of problems, unveiling a route towards entanglement-enhanced metrology of correlated many-body phenomena.
\end{abstract}

\date{\today}

\maketitle
\end{CJK}

%\tableofcontents

\textit{Introduction.---}Quantum metrology represents one of the most promising areas of quantum information science~\cite{Giovannetti2004BeatSQL,Giovannetti2006QuMetrology}, dedicated to estimating parameters encoded in the quantum state of a physical system, such as a spin (qubit), photonic mode, or mechanical oscillator. Employing quantum systems as sensors~\cite{Degen2017RMP_QuSensing} and engineering special probe states---such as Greenberger–Horne–Zeilinger (GHZ) states, spin-squeezed states, and squeezed vacuum states---enables estimation with precision surpassing classical methods. 

When estimating the parameter $\vartheta$ encoded in a quantum system through a unitary process, $\hat{U}_{\vartheta}=e^{-i\vartheta\hat{h}}$, the Heisenberg limit, ${{\rm Var}(\vartheta)\propto 1/\nu^2}$, establishes the ultimate precision bound, attainable via entangled quantum probes given $\nu$ copies of the unitary, $\hat{U}_{\vartheta}^{\otimes\nu}$~\cite{Giovannetti2006QuMetrology}. Here ${\rm Var}(\vartheta)$ denotes the mean squared error. This contrasts with typical shot-noise scaling, ${\rm Var}(\vartheta)\propto 1/\nu$, attainable via independent experiments and separable (non-entangled) quantum probes~\cite{Giovannetti2004BeatSQL,Giovannetti2006QuMetrology}. Prominent examples include: estimating the phase of a two-level system (e.g., a spin or qubit) for precision timing~\cite{Buifmmode199PRL_Clocks,Ludlow2015RMP_Clocks,Pezze2018RMP_atomic,Robinson2024NatPhys_Clocks} or the phase of a bosonic mode for interferometry~\cite{Holland1993PRL_phase,Tomohisa2007Sci_OpticalPhase,Wang2019NatComm_PhaseCQED,Deng2024NatPhys_largeFock}; estimating the displacement of an oscillator~\cite{Wolf2019MotionalFock,Mccormick2019Nat_ionOscillator} for gravitational wave astronomy~\cite{Caves1981LIGOsqueeze,Schnabel2010NatComm_QuMetrologyGW,LIGO2024Sci_squeeze}; and Hamiltonian learning~\cite{Huang2023PRL_LearnHamiltonian,Li2024npj_HLHamLearn,Ma2024LearnHamCompressed,Mirani2024LearnFermions}.

Quantum sensor networks (QSNs) represent a distinct paradigm of quantum metrology in which we aim to estimate, e.g., a collective property, such as a linear combination of local parameters. Entanglement between $K$ quantum sensors facilitates an enhancement (Heisenberg scaling with $K$) for these aggregrate queries~\cite{Eldredge2018PRA_SecureQSNs,Proctor2017QSNarxiv,Proctor2018PRL_qsn,Zhuang2018PRA_dqs,Ge2018PRL_qsn,Qian2019AnalyticFns,Qian2021FieldProps, Zhang2021DQSrvw}. Examples include distributed versions of standard estimation problems, featuring a network of quantum clocks~\cite{Komar2014QuClocks,Kessler2014PRL_QSNClocks,Nichol2022Nat_QSNelementaryClocks,Malia2022Nat_QSNclocks}, networked optical interferometry~\cite{Ge2018PRL_qsn,Guo2020Nat_DQSphase,Liu2021NatPh_DQSphase}, and collective displacement sensing~\cite{Zhuang2018PRA_dqs,Xia202PRL_RadioQSN,Xia2023OmechDQS,Gilmore2021IonEFieldQSN}.

Apart from  unitary parameter estimation, estimating stochastic properties (noise) is equally important. Canonical examples include: spin dephasing~\cite{Kolodynski2013Efficient_Noise,Pirandola2017PRL_AdaptiveLimits}, bosonic dephasing~\cite{Huang2024BoseDephasing}, and random displacement sensing~\cite{Ng2016PRA_NoiseSpectro,Gorecki2022SpreadChannel,Shi2023DMLimits,Tsang2023NoiseSpectr,Gardner2024StochEst} to name a few~\cite{Gorecki2024EnergyTime} (see also the related topic of multi-qubit noise characterization~\cite{Flammia2020PauliLearn,Harper2020PauliLearn, Chen2022:PauliChEst}). Noise estimation fundamentally differs from unitary parameter estimation, presenting its own distinct challenges and techniques~\cite{Pirandola2017PRL_AdaptiveLimits} (see also~\cite{Ji2008Programmable, Escher2011Nat_GenFramework,Dobrza2012ElusiveHeisenberg,Kolodynski2013Efficient_Noise, Demkowicz2014PRL_EntangleVsNoise, Jeske2014NJP_CorrMarkov, Sekatski2017Quantum_FullFast,Dobrza2017PRX_MarkovNoise,Zhou2018QECmetrology}). Given $\nu$ identical and independent copies of a noise channel, $\Phi_{\vartheta}^{\otimes\nu}$, the estimation precision of $\vartheta$ is constrained to shot-noise scaling with $\nu$~\cite{Ji2008Programmable,Dobrza2012ElusiveHeisenberg}. However, insights about independent noise channels do not directly translate to scenarios involving correlated noise.

Under what circumstances can we expect an entanglement advantage when estimating (spatially) correlated stochastic properties with QSNs, and how might we achieve this? Positive answers in this direction have implications for various fields---such as searching for new physics with quantum sensors~\cite{Kimball2023FundPhysAndSpin,YeZoller2024PRL_Essay,Bass2024NatRvw,Brady2022QuNetworkDMSearch,Chen2024PRL_DMQuEnhance,Ito2024JHEP_DMsearch}, collective force~\cite{Brady2023OmechArray,Xia2023OmechDQS} or electric and magnetic field sensing~\cite{Gilmore2021IonEFieldQSN,Ji2024NatPh_randEfield,Rovny2025}, physical-layer supervised learning with QSNs~\cite{Zhuang2019SLAEN_theory,Xia2021PRX_SLAEN}, and interrogating many-body quantum systems~\cite{Casola2018nvCM,Rovny2024NVforManyBody, Rovny2022NVcovariance, Cheng2025:mplex}. {In these instances, correlated noise arises from two-point correlations of the underlying ``reservoir" being probed (Fig.~\ref{fig:qsn-illustration}), whether that be a condensed-matter system or semi-classical field acting on the sensors.}

In this Letter, we develop a theoretical framework for estimating correlated stochastic processes with QSNs. Our analysis reveals that, given a $K$-sensor noise channel, classical spatial correlations between the noise processes at different sensor sites prove crucial for an entanglement advantage in estimation. Physically, the QSN may consist of qubits, fermions, or bosons. Nonetheless, our major findings are agnostic to the composition of the systems and, thus, apply broadly. To illustrate this versatility, we present concrete examples where an entanglement advantage in estimation emerges: including correlated spin, bosonic, and fermionic dephasing and random bosonic displacements. Our rigorous results focus on single-parameter estimation in the weak-noise regime [Eq.~\eqref{eq:channel_expansion}], while extensions to multi-parameter settings are discussed illustratively rather than proven generally.

\begin{figure}
    \centering
    \includegraphics[width=.95\linewidth]{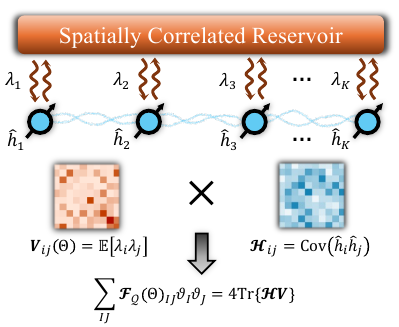}
    \caption{{Entangled sensors probe an extended reservoir. Quantum correlations of the quantum sensor network (viz., $\bm{\mathcal{H}}$) and ``classical" correlations of the reservoir (viz., $\bm V$) collude to enable an entanglement advantage (quantified by the QFI matrix $\bm{\mathcal{F}}_{\mathcal{Q}}$) in estimating aggregate parameters $\Theta=\{\vartheta_I\}$ of the reservoir.}}
    \label{fig:qsn-illustration}
\end{figure}

Problems in estimating correlated spin dephasing~\cite{Hainzer2024PRX_CorrSpec,Matsuzaki2018CollDephase,Wang2024ExpNoise,Dey2024EntangledFreqResol} and random displacements~\cite{Brady2022QuNetworkDMSearch,Brady2023OmechArray, Shi2023DMLimits} have been partially explored. However, estimating collective bosonic dephasing, to the best of our knowledge, has not been addressed, particularly in the energy-constrained setting. Recently, Huang \textit{et al}~\cite{Huang2024BoseDephasing} studied estimation of single-mode bosonic dephasing at infinite energy. Our results on correlated dephasing at finite energy, together with Ref.~\cite{Huang2024BoseDephasing}, sketch a more complete picture of the problem.

\textit{Theoretical framework.---}
Consider $K$ quantum sensors comprising a QSN, and associate to each sensor a local (Hermitian) generator $\hat{h}_j\,(j=1,\dots,K)$. For concreteness, we assume only one generator per sensor, so that all generators commute. The formalism readily extends to multiple non-Hermitian generators; see Appendix~\ref{app:channel_derive} for further discussion. Suppose the local generator $\hat{h}_j$ induces the translation $\lambda_j\in\mathbb{R}$ on the $j$th sensor, and let the the translations $\{\lambda_j\}_{j=1}^K$ be multivariate (e.g., Gaussian) random variables. We prioritize noise estimation and henceforth take $\mathbb{E}[\lambda_j]=0\,\forall\,j$. The translations correspond to (weak) fluctuations, which we package into the $K\times K$ covariance matrix, $\bm V>0$, with elements ${\bm V_{ij}=\mathbb{E}[\lambda_i\lambda_j]}$. [We disregard higher-order moments, presuming they are parametrically weaker.] Generally, there exist correlations between the translations at different sensors due to classically shared randomness.

Let $\rho\in \mathscr{H}^{\otimes K}$ denote the QSN probe state, with $\mathscr{H}$ the Hilbert space of a single sensor. The quantum channel $\Phi_{\bm V}$ encodes the fluctuations $\bm V$ onto the probe via $\Phi_{\bm V}(\rho)$. In this work, we crucially assume that the encoded state admits the approximate form 
\begin{equation}
\label{eq:channel_expansion}
   \Phi_{\bm V}(\rho)\approx \rho+ \sum_{i,j=1}^K\bm V_{ij}\left(\hat{h}_i\rho\hat{h}_j-\frac{1}{2}\acomm{\hat{h}_i\hat{h}_j}{\rho}\right).
\end{equation} 
Intuitively, this can be understood as representing many-body open-system dynamics within the Markov regime, where $\hat{h}_j$ are local jump operators and $\bm V_{ij}=\bm{\gamma}_{ij}\Delta t$, with $\bm{\gamma}_{ij}$ denoting many-body decoherence rates and $\Delta t$ an infinitesimal time interval. {The elements $\bm V_{ij}$ are then proportional to the two-point correlators of a quantum system that the sensors are probing~\cite{Casola2018nvCM, Rovny2024NVforManyBody} or, likewise, to the correlators of a (stochastic) semi-classical field that couples locally to each sensor generator $\hat{h}_i$. In this picture,} $\rho=\rho(t)$ and $\Phi_{\bm V}(\rho)=\rho(t+\Delta t)$. Alternatively, the expansion may describe perturbative random unitary evolution, where the unitary $\bigotimes_{j=1}^K e^{-i\lambda_j\hat{h}_j}$ acts according to probability $p(\lambda_1,\dots,\lambda_K)$, with zero mean and covariance $\bm V$. In this interpretation, the expansion is with respect to a small parameter $\varepsilon\ll 1$ such that $\mathbb{E}[\lambda_{i_1}\dots\lambda_{i_n}\hat{h}_{i_1}\dots\hat{h}_{i_n}]\sim\varepsilon^n$ $(n\geq 2)$. See the Appendix~\ref{app:channel_derive} for further motivation.

We aim to estimate a collection of $n$ aggregate parameters, $\Theta\coloneqq\{\vartheta_J\}_{J=1}^n$. We concentrate on two different cases: (i) The parameters are directly embedded into the covariance matrix, $\bm V(\Theta)$, thereby controlling elements of $\bm V$ (cf.~\cite{Eldredge2018PRA_SecureQSNs,Qian2021FieldProps}). (ii) The parameters are constructed from non-trivial combinations of the $\bm V_{ij}$'s [e.g., $\vartheta^2_{w}(\bm V)=\vec{w}^\top\bm V\vec{w}$], without explicit knowledge of $\bm V$ (cf.~\cite{Proctor2017QSNarxiv,Qian2019AnalyticFns, Bringewatt2021PRR}). For clarity, we refer to these settings as case (i) and case (ii).

To address these problems, we utilize quantum estimation theory~\cite{Paris2009QFI,Sidhu2020QFIrvw,Liu2020QFIM}. Given the encoded quantum data, $\Phi_{\bm V}(\rho)$, we perform measurements and construct estimates, $\{\check{\vartheta}_J\}_{J=1}^n$, from the measurement statistics. We quantify estimation performance by the mean-squared error, ${\rm Var}(\vartheta_J)\coloneqq \mathbb{E}[(\check{\vartheta}_J-\vartheta_J)^2]$. For unbiased estimation ($\mathbb{E}[\check{\vartheta}_J]=\vartheta_J$) the quantum Fisher information (QFI) bounds the mean-squared error from below~\cite{Paris2009QFI,Sidhu2020QFIrvw,Liu2020QFIM},
\begin{equation}\label{eq:var_qfi}
    {\rm Var}(\vartheta_I)\geq \nu^{-1}(\bm{\mathcal{F}}^{-1}_{\mathcal{Q}}(\Theta))_{II}\geq  \nu^{-1} (\bm{\mathcal{F}}_{\mathcal{Q}}(\Theta))_{II}^{-1},
\end{equation}
where $\bm{\mathcal{F}}_{\mathcal{Q}}(\Theta)$ is the QFI matrix for parameters $\Theta$ and $\nu$ denotes the number of independent repetitions. This bound does not guarantee achievability nor provide optimal measurement strategies, subjects that we examine further later.

To compute the QFI matrix, we employ the geometric relation between $\bm{\mathcal{F}}_{\mathcal{Q}}$ and the fidelity of quantum states through the Bures distance~\cite{Braunstein1994PRL_Bures}. 
Assuming $\Theta$ parametrize small perturbations to the identity channel, the Bures distance reads
\begin{equation}\label{eq:bures}
    \sum_{I,J=1}^n\left(\bm{\mathcal{F}}_{\mathcal{Q}}(\Theta)\right)_{IJ}\vartheta_I\vartheta_J=8\left(1-\sqrt{F(\rho,\rho_{\vartheta})}\right),
\end{equation}
where $\rho_{\vartheta}=\Phi_{\bm V}(\rho)$. 
Here, $F(\tau,\rho)={\rm Tr}[\sqrt{\sqrt{\tau}\rho\sqrt{\tau}}\,]^2$ is the fidelity.
We then use the channel approximation~\eqref{eq:channel_expansion} to expand the fidelity to leading order as $F(\rho,\rho_{\vartheta})\approx 1-\Tr{\bm V\bm{\mathcal{H}}}$, assuming $\Tr{\bm V\bm{\mathcal{H}}}\ll 1$ and a pure probe state which maximizes the QFI through a convexity argument; see Appendix~\ref{app:qfim_derive} for details.
This yields our main result, which applies to cases (i) and (ii) mentioned previously.

\begin{result*}\label{result:1}
    Consider the pure QSN probe $\rho=\dyad{\psi}$, and let $\bm{\mathcal{H}}$ denote the generator matrix of $\rho$, with elements $\bm{\mathcal{H}}_{ij}= \expval*{\hat{h}_i\hat{h}_j}-\expval*{\hat{h}_i}\expval*{\hat{h}_j}$. We deduce a direct relation between the QFI matrix $\bm{\mathcal{F}}_{\mathcal{Q}}(\Theta)$, the covariance matrix of the channel $\bm V$, and the generator matrix $\bm{\mathcal{H}}$,
    \begin{equation}\label{eq:qfim_Vh}
        \sum_{I,J=1}^n\left(\bm{\mathcal{F}}_{\mathcal{Q}}(\Theta)\right)_{IJ}\vartheta_I\vartheta_J \overset{!}{=}4\Tr{\bm V\bm{\mathcal{H}}}.
    \end{equation} 
\end{result*}

The degree of classical, spatial correlations between the noise sources thus dictates whether an entanglement advantage manifests.
On the one hand, when the noises are uncorrelated ($\bm V_{ij}=V_i\delta_{ij}$), the QFI matrix depends solely on the local generator variances, $\bm{\mathcal{H}}_{ii}={\rm Var}(\hat{h}_i)$, and entangled quantum probes bear no fruit. This aligns with established results in uncorrelated noise estimation~\cite{Hotta2006NbodyNoise, Kolodynski2013Efficient_Noise}, {which demonstrate that separable probes suffice for optimal estimation}. On the other hand, when spatial correlations between the noises exist, entanglement across the QSN affords enhanced precision for all types of random quantum processes.

\textit{Examples.---}
We first target an appealing class of problems that fall under case (i). Suppose $\bm V_{ij}=g^2\bm v_{ij}$, where $\bm v_{ij}$ are known. Physically, the numbers $\bm v_{ij}$ may be functions of device characteristics or geometrical functions that depend on the relative distances between the sensors, the distances from the sensors to the source of $g$ etc. This is relevant in practical scenarios, where the sensor array has been characterized and calibrated to, e.g., detect global random forces~\cite{Xia2023OmechDQS}, electric fields~\cite{Gilmore2021IonEFieldQSN,Ji2024NatPh_randEfield}, or to search for new physics~\cite{Brady2022QuNetworkDMSearch,Brady2023OmechArray,Kimball2023FundPhysAndSpin,YeZoller2024PRL_Essay,Bass2024NatRvw,Chen2024PRL_DMQuEnhance,Ito2024JHEP_DMsearch}. 

By consequence of Eq.~\eqref{eq:qfim_Vh}, we obtain the following.
\begin{corollary*}\label{cor:fisher_g}
    Suppose $\bm V_{ij}=g^2\bm v_{ij}$ with $g$ an unknown parameter and $\bm v_{ij}$ known. The QFI for estimating $g$ is
\begin{equation}\label{eq:qfi_g}
    \mathcal{F}_{\mathcal{Q}}(g)=4\Tr{\bm{\bm v\mathcal{H}}}.
\end{equation}
\end{corollary*}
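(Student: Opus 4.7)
The plan is to invoke Result 1 directly with $n=1$ and $\vartheta_1 = g$. In this one-parameter setting the QFI matrix $\bm{\mathcal{F}}_{\mathcal{Q}}(\Theta)$ collapses to a scalar, $\mathcal{F}_{\mathcal{Q}}(g)$, and the quadratic form on the left-hand side of Eq.~\eqref{eq:qfim_Vh} becomes $\mathcal{F}_{\mathcal{Q}}(g) \, g^2$. On the right-hand side, substituting $\bm V_{ij} = g^2 \bm v_{ij}$ and using that $\bm v$ and $\bm{\mathcal{H}}$ are independent of $g$ lets me factor $g^2$ out of the trace, giving $4 g^2 \Tr{\bm v \bm{\mathcal{H}}}$. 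Equating the two sides and cancelling a common factor of $g^2$ yields Eq.~\eqref{eq:qfi_g}.

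The only consistency check worth flagging concerns the regime of validity of Result 1, which requires $\Tr{\bm V \bm{\mathcal{H}}} \ll 1$ so that the fidelity can be expanded to leading order in the channel perturbation. Under the parametrization $\bm V = g^2 \bm v$, this condition reads $g^2 \Tr{\bm v \bm{\mathcal{H}}} \ll 1$, which is already part of the weak-noise setup underlying the entire framework and requires no new assumption. Accordingly, I do not anticipate any real obstacle—the corollary is a straightforward one-parameter reduction of Result 1. The only mildly noteworthy observation is that the resulting $\mathcal{F}_{\mathcal{Q}}(g)$ is itself independent of $g$, with all $g$-dependence having been absorbed into the channel strength. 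This is a feature characteristic of weak-noise estimation and is precisely what makes the problem reduce cleanly to the geometric quantity $\Tr{\bm v \bm{\mathcal{H}}}$, which ties the achievable precision to the overlap between the classical correlation pattern $\bm v$ of the noise and the quantum generator covariance $\bm{\mathcal{H}}$ of the probe.
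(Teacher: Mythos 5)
Your proposal is correct and matches the paper's own (implicit) argument: the corollary is stated there simply as a consequence of Eq.~\eqref{eq:qfim_Vh}, obtained exactly as you do by setting $n=1$, $\vartheta_1=g$, substituting $\bm V=g^2\bm v$, and cancelling $g^2$. Your remark on the validity condition $g^2\Tr{\bm v\bm{\mathcal{H}}}\ll 1$ and the $g$-independence of the resulting QFI is consistent with the paper's weak-noise setup and with its later check via the echo protocol, where the classical Fisher information also comes out as $4\Tr{\bm v\bm{\mathcal{H}}}$.
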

\noindent
When $\bm v$ is rank 1 and permutation invariant (indicating maximal correlations), such that $\bm v=K\vec{u}\vec{u}^\top$ with $\vec{u}=(1,1,\dots,1)^\top/\sqrt{K}$, we anticipate the largest entanglement advantage. To formalize this, define the average generator, $\hat{H}_{{\rm avg}}\coloneqq \sum_{i=1}^K\hat{h}_i/K$. Per Eq.~\eqref{eq:qfi_g},
\begin{equation}\label{eq:qfi-Havg}
    \mathcal{F}_{\mathcal{Q}}(g)=4K^2{\rm Var}\big(\hat{H}_{{\rm avg}}\big).
\end{equation}
{To achieve Heisenberg scaling, $\mathcal{F}_{\mathcal{Q}} \propto K^2$, the quantity ${\rm Var}(\hat{H}_{\rm avg})$ must be a constant, i.e., independent of $K$. Since ${{\rm Var}(\hat{H}_{\rm avg}) = \sum_{i,j} {\rm Cov}(\hat{h}_i, \hat{h}_j)/K^2}$, this implies that each two-body correlator must be commensurate, which is possible via long-range many-body quantum correlations. For direct comparison, separable probes satisfy ${\rm Var}(\hat{H}_{\rm avg})\big|_{\rm sep} = \sum_{i}{\rm Var}(\hat{h}_i)/K^2$, which implies shot-noise scaling, $\mathcal{F}_{\mathcal{Q}}^{\rm sep}(g) \propto K$, at best.} We illustrate this for prototypical noise processes.

\subparagraph{Example 1.}
Consider qubit (or spin) dephasing with local generators $\hat{h}_i=\hat{Z}_i$, where $\hat{Z}_i$ is the Pauli-Z operator. Define the average spin ${\hat{Z}_{\rm avg}=\sum_{i=1}^K\hat{Z}_i/K}$. For entangled and separable strategies, respectively, we find 
\begin{align}
        \mathcal{F}_{\mathcal{Q}}^{\rm ent}(g)&=4K^2{\rm Var}\big(\hat{Z}_{\rm avg}\big) \leq 4K^2,\label{eq:spin_ent}
        \\ 
        \mathcal{F}_{\mathcal{Q}}^{\,\rm sep}(g)&=4K\left(\sum_{i=1}^K {\rm Var}(\hat{Z}_i)/K\right) \leq 4K.\label{eq:spin_sep}
\end{align}  
The optimal separable probe is the product state ${((\ket{\uparrow}+\ket{\downarrow})/\sqrt{2})^{\otimes K}}$. The optimal entangled probe is the GHZ state, $(\ket{\uparrow\uparrow\dots}+\ket{\downarrow\downarrow\dots})/\sqrt{2}$ (cf.~\cite{Matsuzaki2018CollDephase,Wang2024ExpNoise,Dey2024EntangledFreqResol}).

\subparagraph{Example 2.}
Consider bosonic dephasing with local generators $\hat{h}_i=\hat{n}_i$, where $\hat{n}_i$ is the occupation operator. Define the average occupation $\hat{n}_{\rm avg}=\sum_{i=1}^K\hat{n}_i/K$, and assume the following constraint on occupation (e.g., energy) fluctuations, $\expval*{\hat{n}_i\hat{n}_j}-\expval*{\hat{n}_i}\expval{\hat{n}_j}\leq \Bar{n}^2$ where $\Bar{n}=\expval*{\hat{n}_{\rm avg}}$. For entangled and separable strategies, respectively, we find
    \begin{align}
        \mathcal{F}_{\mathcal{Q}}^{\rm ent}(g)&= 4K^2 {\rm Var}(\hat{n}_{\rm avg})\leq 4K^2\Bar{n}^2, \label{eq:boson_ent} \\
        \mathcal{F}_{\mathcal{Q}}^{\,\rm sep}(g)&=4K\left(\sum_{i=1}^K {\rm Var}(\hat{n}_i)/K\right)\leq 4K \Bar{n}^2.\label{eq:boson_sep}           
    \end{align}  
Up to a small correction $\order{\Bar{n}-\lfloor\Bar{n}\rfloor}$, the optimal separable probe is the product state $\big((\ket{0}+\ket{N})/\sqrt{2}\big)^{\otimes K}$, where $N=\lfloor\Bar{n}\rfloor$. The optimal entangled probe is the bosonic GHZ state $(\ket{00\dots 0}+\ket{NN\dots N})/\sqrt{2}$, which is an entangled non-Gaussian state. In fact, to reach the limit implied by Eq.~\eqref{eq:boson_ent}, entangled non-Gaussian states are necessary; see Appendix~\ref{app:nogo_gaussian}. 

We readily incorporate fermionic dephasing~\cite{Gonzalez2023FermProcessor,Schuckert2024FermionFTQC}, which shares similarities to both spin dephasing [Eqs.~\eqref{eq:spin_ent} and~\eqref{eq:spin_sep}] and bosonic dephasing [Eqs.~\eqref{eq:boson_ent} and~\eqref{eq:boson_sep}]. Analogous to the bosonic case, $\hat{n}_i$ denotes the fermionic occupation operator responsible for dephasing on the $i$th fermionic mode. However, due to the Pauli exclusion principle, the fermionic occupation per mode is restricted to $N=1$. For an even number of modes, the optimal probe is the fermionic GHZ state.

\subparagraph{Example 3.}
Consider random bosonic displacements with local generators $\hat{h}_i=\hat{p}_i$, where $\hat{p}_i$ is the momentum operator. Define the average momentum $\hat{p}_{\rm avg}=\sum_{i=1}^K\hat{p}_i/K$, and assume a total occupation constraint, $\sum_{i=1}^K\expval{\hat{n}_i}\leq K\Bar{n}$, for any input probe. For entangled and separable strategies, respectively, we find
    \begin{align}
        \mathcal{F}_{\mathcal{Q}}^{\rm ent}(g)&= 4K^2{\rm Var}(\hat{p}_{\rm avg})\leq 8K^2(\Bar{n} + 1/2),\\
        \mathcal{F}_{\mathcal{Q}}^{\,\rm sep}(g) &=4K\left(\sum_{i=1}^K {\rm Var}(\hat{p}_i)/K\right) \leq 8K(\Bar{n}+1/2).
    \end{align}  
The optimal separable probe is a product of $K$ squeezed vacuum states, each with $\Bar{n}$ quanta~\footnote{A Fock state is 3dB shy of the QFI. This stems from the fact that estimating random single-quadrature displacements differs slightly from estimating displacements on both quadratures, e.g., position and momentum fluctuations. For the latter, both Fock states and squeezed vacuum states are optimal in the lossless regime~\cite{Wolf2019MotionalFock,Gorecki2022SpreadChannel}.}. The optimal entangled probe is a distributed squeezed vacuum state (cf.~\cite{Zhuang2018PRA_dqs}) of $K\Bar{n}$ quanta. References~\cite{Brady2022QuNetworkDMSearch,Brady2023OmechArray} explored this estimation problem with QSNs consisting of microwave resonators and opto-mechanical sensors but did not discuss ultimate limits nor schemes to approach them (though see Supplementary Note 6 of Ref.~\cite{Shi2023DMLimits}).

{For maximally symmetric noise correlations [Eq.~\eqref{eq:qfi-Havg}], GHZ-type states achieve Heisenberg scaling. Though our formalism [viz., Eqs.~\eqref{eq:qfim_Vh},~\eqref{eq:qfi_g}] applies to arbitrary initial states and serves to identify QSN states that optimize estimation precision for a given noise-correlation structure.}

\begin{figure}
    \centering
    \includegraphics[width=\linewidth]{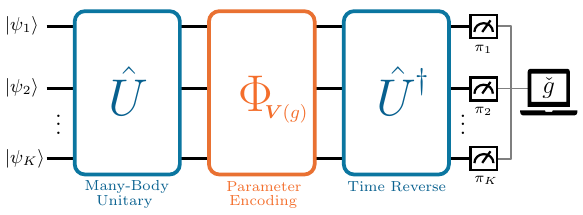}
    \caption{QSN Echo Protocol. Generate entangled probe, $\ket{\psi}$, by acting with many-body unitary, $\hat{U}$, on local states, $\{\ket{\psi_i}\}$. Quantum channel $\Phi_{\bm V}$ encodes parameter $g$ into probe. Revert the many-body unitary via $\hat{U}^\dagger$. Perform local projective measurements, $\pi_i=\dyad{\psi_i}$, and estimate $g$ from the measurement statistics.}
    \label{fig:qsn-echo}
\end{figure}

\textit{Sensing protocol.---}Equation~\eqref{eq:qfim_Vh} establishes fundamental precision limits of correlated noise estimation. However, a major challenge lies in designing sensing strategies that achieve these limits~\cite{Liu2020QFIM}.

Remarkably, for the class of single-parameter problems considered here, i.e., $\bm V=g^2\bm v$ and $\Tr{g^2\bm v\bm{\mathcal{H}}}\ll 1$, there exists an universally optimal sensing protocol that consists of projecting the output of the channel onto the input state (cf.~\cite{Gefen2019NatComm_Resolution, Gorecki2022SpreadChannel}). We present the protocol as a constructive achievability scheme rather than a new measurement primitive, demonstrating that the QFI bounds [Eq.~\eqref{eq:qfi_g}] can be saturated for single-parameter, correlated-noise sensing across various physical setups. We also describe multi-parameter extensions of the protocol to bosonic random-displacement sensing (see below) and multi-axis, collective Pauli noise estimation (see Appendix~\ref{app:spin-example}).

Consider the entangled probe $\rho=\dyad{\psi}$, and define the measurements $M_0=\dyad{\psi}$ and $M_1=I-M_0$. Using Eq.~\eqref{eq:channel_expansion}, we calculate the measurement probabilities: $p_0=\Tr{M_0\Phi_{\bm V}(\dyad{\psi})}\approx 1-g^2\Tr{\bm v\bm{\mathcal{H}}}$ and $p_1\approx g^2\Tr{\bm v\bm{\mathcal{H}}}$. We then compute the classical Fisher information for these measurements, $\mathcal{F}_{\mathcal{C}}(g)=\sum_{i=0}^1 (\partial_gp_i)^2/p_i\approx 4\Tr{\bm v\bm{\mathcal{H}}}$, which achieves the QFI in Eq.~\eqref{eq:qfi_g}. Notably, this procedure does not depend on problem specifics nor the physical systems involved.

The measurement protocol above resembles a metrological Loschmidt echo~\cite{Macri2016Echo,Yin2024echo,Gilmore2021IonEFieldQSN,Colombo2022QSNRamseyEcho}. Suppose we prepare the probe, $\psi$, by acting with the unitary circuit $\hat{U}$ on local quantum states, $\psi_i$, such that $\ket{\psi}=\hat{U}\big(\bigotimes_{i=1}^K\ket{\psi_i}\big)$. Then, we interpret the protocol as the following sequence: (1) Prepare the sensors in local product states, $\bigotimes_{i=1}^K\ket{\psi_i}$. (2) Entangle the sensors with the many-body operation $\hat{U}$. (3) Encode parameter $g$ via $\Phi_{\bm V}$. (4) Evolve the sensors under $\hat{U}^\dagger$, i.e., the time-reverse of $\hat{U}$. (5) Perform local projective measurements, $\pi_i=\dyad{\psi_i}$, and construct the binary POVM $\{M_0,I-M_0\}$, with $M_0=\bigotimes_i\pi_i$, to estimate $g$ from the measurement data. See Fig.~\ref{fig:qsn-echo}.

\textit{Multiple noise parameters.---}The case of multiple parameters~\cite{Ragy2016PRA_compatibility, Liu2020QFIM} poses significant challenges compared to single-parameter estimation. We observed entanglement advantage when estimating the single noise parameter $g$ [i.e., case (i)] and elaborated on achievability with specific sensing protocols. It is worthwhile to consider whether an advantage appears in multi-parameter problems. We argue that, at least in the case-study below an entanglement advantage is conceivable when restricting to $n< K$ aggregate parameters~\cite{Proctor2018PRL_qsn,Qian2019AnalyticFns,Bringewatt2021PRR}; see Appendix~\ref{app:spin-example} for a spin-based example using a purification of the Dicke (fully symmetric) subspace projector.

We construct the following multi-parameter example, which falls under case (ii). Consider the (user-specified) orthogonal matrix $\bm W=(\vec{w}_1,\dots,\vec{w}_K)^{\top}$, such that $\vec{w}_I^\top\vec{w}_J=\delta_{IJ}$. Let
$\bm V^{\prime}_{IJ}\coloneqq\vec{w}_I^\top\bm V\vec{w}_J$, and parametrize $\bm V^\prime$ via $\bm V^\prime_{IJ}=\xi_I\xi_J\mathscr{C}_{IJ}$, where $\xi_I^2\coloneqq \vec{w}_I^\top \bm V\vec{w}_I$. Here, $\mathscr{C}_{II}=1$ and $\mathscr{C}_{IJ}\leq 1$ otherwise~\footnote{There are also negativity conditions on $\mathscr{C}_{IJ}$ set by ${\bm V^\prime> 0}$, which is unimportant to our discussion.}, and we assume no knowledge of the elements of $\bm V$. We are interested in estimating aggregate parameters $\Xi\coloneqq\{\xi_I\}_{I=1}^K$, which describe fluctuations of the $I$th collective mode. The numbers $\mathscr{C}_{IJ}$ ($I\ne J$) represent correlation coefficients, which we treat as nuisance parameters here. We state the following claim (see Appendices for details).
\begin{claim*}
    The diagonal elements of the QFI matrix for the parameter set of non-local fluctuation $\Xi$ are
        \begin{equation}\label{eq:multiparam_bd}
        (\bm{\mathcal{F}}_{\mathcal{Q}}(\Xi))_{II}=4\vec{w}_I^\top\bm{\mathcal{H}}\vec{w}_I.
    \end{equation}
\end{claim*}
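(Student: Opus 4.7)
The plan is to push the problem back to Result~1 by changing basis via the orthogonal matrix $\bm W$ and then reading off the coefficient of $\xi_I^2$ in the resulting quadratic form. Since $\bm W$ is orthogonal, $\bm V' = \bm W\bm V\bm W^\top$, so $\bm V=\bm W^\top\bm V'\bm W$. Substituting the parameterization $\bm V'_{IJ}=\xi_I\xi_J\mathscr{C}_{IJ}$ and using cyclicity of the trace, I would rewrite
\begin{equation*}
    \Tr{\bm V\bm{\mathcal{H}}}=\Tr{\bm V'\,\bm W\bm{\mathcal{H}}\bm W^\top}=\sum_{I,J=1}^K\xi_I\xi_J\,\mathscr{C}_{IJ}\,\vec{w}_I^\top\bm{\mathcal{H}}\vec{w}_J,
\end{equation*}
where the symmetry $\bm{\mathcal{H}}^\top=\bm{\mathcal{H}}$ (guaranteed by $[\hat h_i,\hat h_j]=0$) and $\mathscr{C}_{IJ}=\mathscr{C}_{JI}$ (forced by $\bm V'=\bm V'^{\top}$) are used to keep the quadratic form symmetric.

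Next I would feed this into Result~1 [Eq.~\eqref{eq:qfim_Vh}] with the parameters taken to be $\Xi$ (the $\xi_I$'s), treating the $\mathscr{C}_{IJ}$ ($I\neq J$) as fixed nuisance parameters during the differentiation. Because both sides of Eq.~\eqref{eq:qfim_Vh} are homogeneous quadratic forms in the estimation parameters near $\Xi=0$, the $(I,I)$ element of the QFI matrix is isolated by the second $\xi_I$-derivative:
\begin{equation*}
    (\bm{\mathcal{F}}_{\mathcal{Q}}(\Xi))_{II}=\tfrac{1}{2}\,\frac{\partial^{2}}{\partial\xi_I^{2}}\!\left[\sum_{J,L}(\bm{\mathcal{F}}_{\mathcal{Q}}(\Xi))_{JL}\,\xi_J\xi_L\right]=\tfrac{1}{2}\,\frac{\partial^{2}}{\partial\xi_I^{2}}\!\left[4\Tr{\bm V\bm{\mathcal{H}}}\right].
\end{equation*}
Applying $\partial^2/\partial\xi_I^2$ to the bilinear expression above selects only the $J=I$, $K=I$ term in the double sum and brings down a factor of $2$, leaving $(\bm{\mathcal{F}}_{\mathcal{Q}}(\Xi))_{II}=4\mathscr{C}_{II}\,\vec{w}_I^\top\bm{\mathcal{H}}\vec{w}_I$. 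Invoking the normalization $\mathscr{C}_{II}=1$ built into the parameterization then yields Eq.~\eqref{eq:multiparam_bd}.

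The only conceptual subtlety — and the step I would be most careful about — is the status of the off-diagonal $\mathscr{C}_{IJ}$. Because the QFI matrix on the full parameter set (both $\Xi$ and the $\mathscr{C}_{IJ}$'s) is the Hessian of $8(1-\sqrt F)$, the $(\xi_I,\xi_I)$ diagonal block we want is genuinely obtained by differentiating twice with respect to $\xi_I$ alone, which is what makes the nuisance-parameter dependence drop out cleanly at leading order around $\Xi=0$ (cross terms involving derivatives of $\mathscr{C}_{IJ}$ appear in different blocks of the full QFI matrix and do not contribute to $(\bm{\mathcal{F}}_{\mathcal{Q}}(\Xi))_{II}$). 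The rest is bookkeeping of the rotation $\bm W$ and the symmetry factors; no channel-specific input beyond Result~1 is required, so the argument applies uniformly to spins, bosons, and fermions.
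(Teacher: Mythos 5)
Your proposal is correct and follows essentially the same route as the paper's Appendix D: rotate the covariance into the $\bm W$ basis, expand $\Tr{\bm V\bm{\mathcal{H}}}=\sum_{I,J}\xi_I\xi_J\mathscr{C}_{IJ}\,\vec{w}_I^\top\bm{\mathcal{H}}\vec{w}_J$, and identify the coefficient of $\xi_I^2$ in the quadratic form of Eq.~\eqref{eq:qfim_Vh}. The only cosmetic differences are that you formalize the paper's ``matching like terms'' step as a second derivative in $\xi_I$ (with an explicit justification that the nuisance parameters $\mathscr{C}_{IJ}$ live in other blocks of the full QFI matrix), and that you specialize to orthogonal $\bm W$ from the outset, whereas the paper first treats general invertible $\bm W$ and then specializes.
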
 
Equation~\eqref{eq:multiparam_bd} suggests a simultaneous entanglement advantage in estimating all parameters in $\Xi$. However, the QFI bounds the error from below [see Eq.~\eqref{eq:var_qfi}], and the corresponding bound is generically not saturable for multiple parameters~\cite{Liu2020QFIM}. Though, if only the subset of parameters $\Xi_{n}\subset \Xi$ ($n< K$) interest us, it seems plausible to achieve a \textit{simultaneous} entanglement advantage ($\propto\! K/n$) over separable probes for all parameters in $\Xi_{n}$. We support this conjecture by an example involving correlated quadrature fluctuations in a system of $K$ oscillators; see Appendix~\ref{app:spin-example} for a spin-based example. 

Consider the random displacement channel $\Phi_{\bm V}$, where $\bm V$ denotes the covariance of displacements generated by local momenta $\{\hat{p}_i\}_{i=1}^K$. Let $\bm V^\prime_{IJ}=\vec{w}_I^\top\bm V\vec{w}_J$ represent non-local fluctuations generated by collective momenta $\{\hat{P}_I\}_{I=1}^K$, where $\hat{P}_I=\sum_i(\vec{w}_I)_i\hat{p}_i$. In other words, $\bm V^\prime_{II}\eqqcolon\xi_I^2$ symbolizes displacement fluctuations along the $I$th collective mode of the QSN. Equation~\eqref{eq:multiparam_bd} implies $\mathcal{F}^{\rm ent}(\xi_I)\propto{\rm Var}(\hat{P}_I)$, while for separable strategies, $\mathcal{F}^{\rm sep}(\xi_I)\propto\sum_{i=1}^{K}\bm W_{Ii}^2{\rm Var}(\hat{p}_i)$. We direct our attention to simultaneously estimating all parameters in the the subset $\Xi_{n}=\{\xi_I\}_{I=1}^{n}$ with $n<K$, for a fixed total occupation, $K\Bar{n}$. For the entangled strategy, we populate (e.g., squeeze) the $n$ collective modes---thus forming a continuous-variable entangled state of the QSN~\cite{Zhuang2018PRA_dqs}---with ${\rm Var}(\hat{P}_I)\approx K \Bar{n}/n$ for every $I\in\{1,\dots,n\}$~\cite{Zhuang2019SLAEN_theory}. Contrariwise, the separable strategy requires populating all $K$ modes with $\Bar{n}$ quanta each. It follows that $\mathcal{F}^{\rm ent}/\mathcal{F}^{\rm sep}\propto K/n$, suggesting a simultaneous entanglement advantage over separable strategies for \textit{every} parameter $\xi_I\in\Xi_{n}$. 

A variant of the echo sensing protocol realizes the above advantage~\footnote{The variation of the echo protocol comes in the final measurement step, which now involves $n$ independent POVMs, $\bigoplus_{I=1}^n\{M_{0}^{(I)},I-M_{0}^{(I)}\}$ used to simultaneously estimate each $\xi_I\in\Xi_{n}$. For (distributed) squeezed vacuum inputs, the measurements consist of anti-squeezing followed by photon counting~\cite{Gorecki2022SpreadChannel,Shi2023DMLimits,Tsang2023NoiseSpectr}.}. This resembles the original distributed sensing scheme of Refs.~\cite{Zhuang2018PRA_dqs,Zhuang2019SLAEN_theory} but, crucially, employs anti-squeezing and photon counting, rather than linear detection, on each mode at the measurement end.

{This example illustrates the structure supporting an entanglement advantage in multi-parameter correlated-noise estimation. However, a full derivation for $n<K$ parameters in generic settings remains outstanding. Further investigation is needed to characterize measurement compatibility for multiple parameters and pin down the regimes for entanglement advantage in various systems.}

\textit{Discussion.---}We have shown how quantum correlations within the QSN and spatial correlations within the noise process must collude to enable an entanglement advantage in noise estimation [viz., Eq.~\eqref{eq:qfim_Vh}]. The noise correlations are paramount, since entanglement alone does not improve the estimation of uncorrelated noises~\cite{Hotta2006NbodyNoise, Kolodynski2013Efficient_Noise}.

It is instructive to compare correlated noise estimation to conventional unitary estimation. In the unitary case, having $K$ copies of the unitary channel $\hat{U}^{\otimes K}$ and entanglement between the $K$ sensors improves estimation of the unitary process, yielding Heisenberg scaling with $K$~\cite{Giovannetti2006QuMetrology}. By contrast, probing $K$ copies of a single (i.e., uncorrelated) noise channel $\varphi^{\otimes K}$ yields shot-noise scaling with $K$ {when estimating the noise process}~\cite{Ji2008Programmable,Dobrza2012ElusiveHeisenberg}. {However, as we have shown, a correlated $K$-sensor noise channel $\Phi$, such that $\Phi \neq \varphi^{\otimes K}$, enables Heisenberg scaling with $K$ in noise estimation when probed with an entangled QSN state such as a GHZ state.}

Another distinction lies in decoherence effects. In unitary estimation, decoherence generated by the same Hamiltonian as the unitary process (i.e., parallel decoherence~\cite{Sekatski2017Quantum_FullFast}) inhibits Heisenberg scaling~\cite{Sekatski2017Quantum_FullFast,Dobrza2017PRX_MarkovNoise, Zhou2018QECmetrology}. Whereas, for correlated noise estimation, an entanglement advantage persists in the face of parallel decoherence. However, parallel decoherence does inevitably conjure Rayleigh's curse~\cite{Tsang2016Superresolution,Gardner2024StochEst}---that is, the QFI vanishes as the signal tends to zero; see Appendix~\ref{app:curse} for further discussion. 

{These findings highlight several avenues for future research. One direction is to extend the present framework to spatio-temporally correlated (non-Markovian) noise~\cite{Chin2012:nonMarkvoQuMetrology, White2022:nonMarkovQPT}, which requires a continuous-time treatment of the QSN and may benefit from temporal correlations in the probe state~\cite{Knaut2024:EarlyLateEntanglement}.} Developing a more comprehensive view of multi-parameter problems will be valuable. Assessing what role quantum control has to play in these contexts should prove fruitful~\cite{Sekatski2017Quantum_FullFast, Gardner2024StochEst, Shi2024MitRayleigh, Bringewatt2024, Gardner2025LindbladEst}. Evaluating the impact of decoherence, with distinct generators from the signal, offers meaningful insights~\cite{Gardner2025LindbladEst}. From a physics perspective, applications to searches for new physics~\cite{Brady2022QuNetworkDMSearch,Brady2023OmechArray,Kimball2023FundPhysAndSpin,YeZoller2024PRL_Essay,Bass2024NatRvw,Chen2024PRL_DMQuEnhance,Ito2024JHEP_DMsearch} and many-body quantum metrology~\cite{Casola2018nvCM, Rovny2024NVforManyBody, Montenegro2024ManyBodyMetrology, Rovny2025} exemplify compelling avenues for exploration and practical realization. 

\textit{Acknowledgments.---}The authors acknowledge Jacob Bringewatt for helpful conversations.
A.J.B.~acknowledges support from the NRC Research Associateship Program at NIST; A.J.B.~was supported by ONR N00014-23-1-2296 at USC for the initial part of the work. 
Y.-X.W.~acknowledges support from a QuICS Hartree Postdoctoral Fellowship. 
Q.Z.~acknowledges NSF (CCF-2240641, OMA-2326746, 2350153), ONR N00014-23-1-2296, ONR MURI N000142612102, AFOSR MURI FA9550-24-1-0349 and DARPA (HR00112490362, HR00112490453, D24AC00153-02). A.V.G.~was supported in part by ONR MURI, AFOSR MURI, DARPA SAVaNT ADVENT, ARL (W911NF-24-2-0107), NQVL:QSTD:Pilot:FTL, NSF QLCI (award No.~OMA-2120757), NSF STAQ program, and DoE ASCR Quantum Testbed Pathfinder program (awards No.~DE-SC0019040 and No.~DE-SC0024220). A.V.G.~also acknowledges support from the U.S.~Department of Energy, Office of Science, National Quantum Information Science Research Centers, Quantum Systems Accelerator (QSA) and from the U.S.~Department of Energy, Office of Science, Accelerated Research in Quantum Computing, Fundamental Algorithmic Research toward Quantum Utility (FAR-Qu).  \\[-0.5em]

%\end{acknowledgments}

\bibliography{main}

%------- Appendix

\clearpage
\appendix
\onecolumngrid
%\tableofcontents

\section*{Appendices}
Here we provide details otherwise omitted in the main text. In Section~\ref{app:channel_derive}, we motivate the form of the approximate quantum channel utilized throughout the article. In Section~\ref{app:qfim_derive}, we derive the Result [Eq.\ (\ref{eq:qfim_Vh})] of the main text, i.e.~the direct relation between the QFI matrix and the noise covariance $\bm V$ of the quantum channel $\Phi_{\bm V}$. In Section~\ref{app:nogo_gaussian}, we derive a no-go result for entanglement enhanced dephasing estimation with entangled states generated by (passive) linear optics. In Section~\ref{app:multi-parameter}, we derive the QFI matrix for multiple noise parameters (see Claim [Eq.\ (\ref{eq:multiparam_bd})] of main text). In Section~\ref{app:curse}, we elaborate on how parallel decoherence triggers Rayleigh's curse in noise estimation problems. Finally, in Section~\ref{app:spin-example}, we analyze an explicit multi-parameter collective dephasing example using a spin-based QSN, highlighting the possibility of simultaneous estimation at the Heisenberg limit.

\section{Modeling the Approximate Noise Channel \label{app:channel_derive}}

In this section, we motivate the form of the approximate quantum channel [Eq.~\eqref{eq:channel_expansion}] analyzed throughout the main text. Recall the approximate form of the QSN noise channel $\Phi_{\bm V}$, which we write out here for convenience,
\begin{equation}
\label{appeq:ch_expand}
   \Phi_{\bm V}(\rho)\approx \rho+ \sum_{i,j=1}^K\bm V_{ij}\left(\hat{h}_i\rho\hat{h}_j-\frac{1}{2}\acomm{\hat{h}_i\hat{h}_j}{\rho}\right),
\end{equation} 
with $\hat{h}_j$ Hermitian. This formulation extends to non-Hermitian generators since $\hat{A}=\hat{h}+i\hat{f}$ for any operator $\hat{A}$, with $\hat{h}$ and $\hat{f}$ Hermitian; see the paragraph at the end of this section for further discussion.

We consider two physical, yet fairly generic, settings where this approximation applies: (a) short time open-system dynamics and (b) weak, random unitary channels. The explicit examples of spin dephasing, bosonic (fermionic) dephasing, and bosonic random displacements considered in the article fall within either setting.

\paragraph{Open-System Dynamics.}
The expansion~\eqref{appeq:ch_expand} mimics open-system dynamics described through the Lindblad master equation~\cite{Manzano2020LindbladIntro}. To reveal the correspondence, we first write the (Markovian) master equation in its general form,
\begin{equation}
\label{appeq:master_eq}
    \partial_t\rho=-i\comm*{\hat{H}_S}{\rho} + \sum_{i,j}\bm{\gamma}_{ij}\left(\hat{A}_j\rho\hat{A}_i^\dagger-\frac{1}{2}\acomm{\hat{A}_i^\dagger\hat{A}_j}{\rho}\right),
\end{equation}
where $\hat{H}_S$ is the system Hamiltonian, $\bm{\gamma}$ is the positive semi-definite matrix symbolizing the system's decoherence processes (per unit time), and $\{\hat{A}_j\}$ are jump operators. For purely noisy dynamics, as considered here, we {assume non-interacting sensors in the QSN} and thus take $\hat{H}_S=0$. Further, suppose the jump operators are Hermitian, i.e. $\hat{A}_i^\dagger=\hat{A}_i$ (though this may be generalized). Consider an infinitesimal time interval $\Delta t$, such that ${\partial_t\rho\approx (\rho(t+\Delta t)-\rho(t))/\Delta t}$. We then make the following correspondences, $\Phi_{\bm V}(\rho)= \rho(t+\Delta t)$, $\hat{h}_i=\hat{A}_i$, and $\bm V_{ij}=\bm{\gamma}_{ij}\Delta t$. Hence, estimation problems associated with $\bm V$ equate to estimation problems associated with the decoherences $\bm{\gamma}$ over a short time $\Delta t$ that roughly satisfies $\Delta  t\ll 1/\big(\sum_{i,j}\bm{\gamma}_{ij}{\rm Tr}(\hat{h}_i\hat{h}_j\rho)\big)$.

{\subparagraph{Dyson series expansion.}
At a physical level, the decoherence rates $\bm\gamma$ that appear in a Lindblad description are nothing more than the two-point correlators of the ``reservoir" (or environment, we use these terms interchangeably) degrees of freedom that the sensors are probing. The reservoir could be a many-body quantum system itself or a semi-classical field. To further illustrate this point, we perturbatively analyze unitary coupling between each sensor coupled locally to the same (Markovian) reservoir, which ultimately leads to the same description above. 

We label the Hilbert spaces as $S=\bigotimes_{i=1}^K S_i$ for the sensors and $E=\bigotimes_{i=1}^K E_i$ for the environment. We assume that the initial state factorizes as $\rho_{S}\otimes\tau_E$, where $\rho_S$ denotes the initial QSN state at $t=0$ and $\tau_E$ the initial state of the environment. We model the joint evolution as a spatially local bilinear interaction Hamiltonian $\hat{H}_{SE}$, represented in a time-ordered unitary $\hat{U}_{SE}$, which we expand perturbatively in a Dyson series. This makes transparent how, after tracing out the environment, the sensor dynamics acquire Lindblad-type terms whose coefficients are set precisely by the environmental two-point correlators.

The joint state after interaction is 
\begin{equation}
    \rho_{S E}(\Theta)=\hat{U}_{SE}(\Delta t)(\rho_{S }\otimes\tau_E)\hat{U}_{SE}^\dagger(\Delta t).
\end{equation}
The (time ordered) interaction unitary is 
\begin{equation}
    \hat{U}_{SE}(\Delta t)=\mathcal{T}\exp[-i\int_0^{\Delta t}\dd{t}\hat{H}_{SE}(t)]
\end{equation}
with $\hat{H}_{SE}(t)$ the time-dependent interaction operator. We model the coupling as a spatially local bi-linear interaction, 
\begin{equation}
    \hat{H}_{SE}(t)= \sum_i \hat{H}_i(t) =\sum_i g_i\hat{h}_i\hat{e}_i(t),
\end{equation}
where the index $i$ labels the $i$th sensor and $i$th local environmental degree of freedom; $g_i\in\mathbb{R}$ characterizes the interaction strength; $\hat{h}_i$ is the time-independent (in the rotating frame) generator for the $i$th sensor; and $\hat{e}_i(t)$ is the time-dependent environment operator. We assume the environment is Markovian (i.e., no temporal correlations) such that $\expval*{\hat{e}_i(\tau)\hat{e}_i(0)}\propto \delta(\tau)$; though we do allow for spatial correlations between environmental subsystems.

We now expand $\hat{U}_{SE}(\Delta t)$ in a Dyson series,
\begin{equation}
    \hat{U}_{SE}(\Delta t)\approx \hat{I} - i\int_{0}^{\Delta t}\dd{t}\sum_j\hat{H}_j(t) -\frac{1}{2}\int_{0}^{\Delta t}\dd{t}\int_{0}^t\dd{t'}\sum_{j,k}\hat{H}_j(t)\hat{H}_{k}(t'),
\end{equation}
with the sum taken over all sensors in the network. From here we obtain an expression for the joint state post-interaction:
\begin{multline}
    \rho_{SE}(\Theta)\approx \rho_S\otimes\tau_E - i\int_{0}^{\Delta t}\dd{t}\sum_j\comm{\hat{H}_j(t)}{\rho_{S}\otimes\tau_E} \\ + \int_{0}^{\Delta t}\dd{t}\int_{0}^t\dd{t'}\sum_{j,k}\left(\hat{H}_j(t)\rho_{S}\otimes\tau_E\hat{H}_{k}(t')-\frac{1}{2}\acomm{\hat{H}_j(t)\hat{H}_{k}(t')}{\rho_{S}\otimes\tau_E}\right).
\end{multline}
We only access the reservoir indirectly by measuring the sensors. Thus, we must trace over the reservoir degrees of freedom and find an expression solely for the quantum sensor network state, $\rho_{S}(\Theta)=\Tr_E\{\rho_{SE}(\Theta)\}$. We assume a zero-mean Markovian reservoir but allow for spatial correlations between subsystems, such that 
\begin{equation}
    \Tr\{\tau_E\hat{e}_i\}=0 \qq{and} \Tr\{\tau_E\hat{e}_i(\tau)\hat{e}_{j}(0)\}=\mathscr{E}_{ij}\delta(\tau),
\end{equation}
with $\mathscr{E}_{ij}$ representing the (spatial) correlation matrix of the reservoir. The following relations then hold:
\begin{align}
    \Tr_E\left(\comm{\hat{H}_j}{\rho_{S}\otimes\tau_E}\right)&=0,\label{eq:nodrive}\\
    \Tr_E\left(\hat{H}_j(t)\rho_{S}\otimes\tau_E\hat{H}_{k}(t')\right)&=g_jg_k\mathscr{E}_{jk}\hat{h}_j\rho_{S}\hat{h}_k\delta(t-t'),\\
    \Tr_E\left(\hat{H}_j(t)\hat{H}_{k}(t')\rho_{S}\otimes\tau_E\right)&=g_jg_k\mathscr{E}_{jk}\hat{h}_j\hat{h}_k\rho_{S}\delta(t-t').
\end{align}
The first equality implies that the environment does not coherently drive the sensors, while the second and third equalities follow from the Markovian assumption. Let $\bm{\gamma}_{jk}\coloneqq g_jg_k\bm{\mathscr{E}}_{jk}$. Given that $\hat{h}_i$ are time-independent operators, we deduce, after a trivial time integral, the approximate form of the output QSN state as
\begin{align}
    \rho_{S}(\Theta) \approx \rho_{S} &+  \Delta t\sum_{j,k}\bm{\gamma}_{jk}\left(\hat{h}_j\rho_{S}\hat{h}_k-\frac{1}{2}\acomm{\hat{h}_j\hat{h}_k}{\rho_{S}}\right),\label{eq:rho_oscillator}
\end{align}
which matches Eq.~\eqref{appeq:ch_expand}. In this regime, single-jump processes govern the QSN evolution, and the imprint of the interaction accumulates linearly with the interaction time $\Delta t$, consistent with Markovian dynamics. 
}

\paragraph{Weak, Random Unitary Channels.}
Consider the set of Hermitian generators $\hat{\vec{h}}=(\hat{h}_1, \hat{h}_2,\dots,\hat{h}_K)$ which induce shifts $\vec{\lambda}=(\lambda_1,\lambda_2,\dots,\lambda_K)$ on a quantum system described by the state $\rho$. For simplicity, we assume the generators are independent of $\vec{\lambda}$, while the $\lambda_j$'s are stochastic and described by the probability distribution, $p(\vec{\lambda})$. We associate a single realization of shifts with the parametrized unitary,
\begin{equation}\label{eq:unitary_h}
    \hat{U}_{\vec{\lambda}} = e^{-i\vec{\lambda}^\top\hat{\vec{h}}}.
\end{equation}
The following quantum channel describes the (random) evolution of the quantum system,
\begin{equation}\label{eq:probU_channel}
    \Phi(\rho)=\int\dd{\vec{\lambda}}p(\vec{\lambda})\hat{U}_{\vec{\lambda}}\rho\hat{U}_{\vec{\lambda}}^\dagger,
\end{equation}
Herein, we focus on dynamics associated with the mean $\vec{\mu}=\mathbb{E}[\vec{\lambda}]$ and covariance matrix $\bm V_{ij}=\mathbb{E}[\Delta\vec{\lambda}_i\Delta\vec{\lambda}_j]$, where $\Delta\vec{\lambda}\coloneqq\vec{\lambda}-\vec{\mu}$. The mean determines the unitary part of the evolution, while $\bm V$ describes fluctuations about the mean. We do not concern ourselves with higher order moments, supposing that fluctuations are weak.
\begin{lemma}\label{lemma:expand}
Assume fluctuations weakly perturb the quantum system, $\rho$, and the ``energy'' of the system is sufficiently small, such that $\mathbb{E}[\Delta\vec{\lambda}_{i_1}\dots\Delta\vec{\lambda}_{i_n}\hat{h}_{i_1}\dots\hat{h}_{i_n}]\sim\varepsilon^n$ for $n\geq 2$ and $\varepsilon\rightarrow 0$. If (1) the generators commute to a constant ($\comm*{\hat{h}_i}{\hat{h}_j}={\bm c}_{ij}\hat{I}$) or (2) the channel mean vanishes ($\vec{\mu}=0$), then,
\begin{equation}
\label{eq:simulable_expansion}
    \Phi_{\bm V}(\rho)\coloneqq \hat{U}_{\vec{\mu}}^\dagger\Phi(\rho)\hat{U}_{\vec{\mu}}\approx \rho+ \sum_{i,j}\bm V_{ij}\left(\hat{h}_i\rho\hat{h}_j-\frac{1}{2}\acomm{\hat{h}_i\hat{h}_j}{\rho}\right) + \order{\varepsilon^3}.
\end{equation}
\end{lemma}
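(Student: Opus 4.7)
The overall strategy is to first reduce the average $\hat{U}_{\vec{\mu}}^\dagger \Phi(\rho) \hat{U}_{\vec{\mu}}$ to an average over unitaries generated purely by the fluctuations $\Delta\vec{\lambda}$, and then Taylor expand in $\varepsilon$ to second order, keeping only the $\bm V$-dependent contribution. The two stated conditions are exactly what is needed for the first reduction to go through cleanly.

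For the reduction, I would split into the two cases. If $\vec{\mu}=0$, then $\Delta\vec{\lambda}=\vec{\lambda}$ and $\hat{U}_{\vec{\mu}}=\hat{I}$, so the conjugation by $\hat{U}_{\vec{\mu}}$ is trivial and $\Phi_{\bm V}(\rho)=\mathbb{E}\big[e^{-i\Delta\vec{\lambda}^\top\hat{\vec{h}}}\rho\,e^{i\Delta\vec{\lambda}^\top\hat{\vec{h}}}\big]$. If instead $[\hat{h}_i,\hat{h}_j]=\bm{c}_{ij}\hat{I}$ is central, I would invoke the Baker--Campbell--Hausdorff identity: since the commutator between $i\vec{\mu}^\top\hat{\vec{h}}$ and $-i\vec{\lambda}^\top\hat{\vec{h}}$ is proportional to $\hat{I}$ (and, by hermiticity of the generators, is $i\hat{I}$ times a real scalar), we have $\hat{U}_{\vec{\mu}}^\dagger\hat{U}_{\vec{\lambda}}=e^{-i\Delta\vec{\lambda}^\top\hat{\vec{h}}}\,e^{i\varphi(\vec{\mu},\vec{\lambda})}$ for some real phase $\varphi$. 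This scalar phase cancels against its conjugate on the right of $\rho$, so the conjugation once again collapses to $e^{-i\Delta\vec{\lambda}^\top\hat{\vec{h}}}\rho\,e^{i\Delta\vec{\lambda}^\top\hat{\vec{h}}}$. In both cases the problem thus reduces to evaluating one and the same mean-zero random-unitary average.

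Next, I would apply the Hadamard lemma, $e^{A}\rho e^{-A}=\rho+[A,\rho]+\tfrac{1}{2}[A,[A,\rho]]+\cdots$, with $A=-i\Delta\vec{\lambda}^\top\hat{\vec{h}}$, and take the expectation term by term. The linear term vanishes because $\mathbb{E}[\Delta\vec{\lambda}_i]=0$. The quadratic term contributes $-\tfrac{1}{2}\sum_{ij}\bm V_{ij}[\hat{h}_i,[\hat{h}_j,\rho]]$. All higher-order terms are $\order{\varepsilon^{3}}$ by the stated moment-scaling hypothesis $\mathbb{E}[\Delta\vec{\lambda}_{i_1}\cdots\Delta\vec{\lambda}_{i_n}\hat{h}_{i_1}\cdots\hat{h}_{i_n}]\sim\varepsilon^{n}$, which I would invoke as the uniform bound needed to truncate. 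Finally, expanding the double commutator and symmetrizing the $(i,j)$ sum using $\bm V_{ij}=\bm V_{ji}$ yields
\begin{equation*}
-\tfrac{1}{2}\sum_{ij}\bm V_{ij}[\hat{h}_i,[\hat{h}_j,\rho]]=\sum_{ij}\bm V_{ij}\Big(\hat{h}_i\rho\hat{h}_j-\tfrac{1}{2}\{\hat{h}_i\hat{h}_j,\rho\}\Big),
\end{equation*}
which is precisely the Lindbladian form in Eq.~\eqref{eq:simulable_expansion}.

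The main subtlety, and the reason the two conditions appear at all, lies in the factorization step. For general non-commuting generators combined with a nonzero $\vec{\mu}$, the BCH expansion of $\hat{U}_{\vec{\mu}}^\dagger\hat{U}_{\vec{\lambda}}$ produces nested commutators involving $\vec{\mu}\otimes\Delta\vec{\lambda}$ that are operator-valued rather than central, and these contaminate the second-order truncation by injecting additional unitary evolution driven by $\vec{\mu}$ that does not factor into the dissipator. The two listed conditions are the two simplest ways of sterilizing this contamination: either the problematic terms vanish because $\vec{\mu}=0$, or they are scalars that drop out under conjugation because the generators commute to $\hat{I}$. I do not anticipate any deeper obstacle beyond carefully tracking this factorization and the order counting in $\varepsilon$.
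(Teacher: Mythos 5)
Your proposal is correct and follows essentially the same route as the paper: factor out the mean unitary using the fact that the cross-commutator is a central (purely imaginary) scalar whose phase cancels under conjugation — the paper does this via the Zassenhaus formula, you via BCH, which is the same mechanism — and then expand the residual mean-zero conjugation to second order and take expectations, with the linear term killed by $\mathbb{E}[\Delta\vec{\lambda}]=0$ and higher orders controlled by the moment-scaling hypothesis. The only cosmetic difference is that you organize the second-order term via the Hadamard lemma as $-\tfrac{1}{2}\sum_{ij}\bm V_{ij}[\hat{h}_i,[\hat{h}_j,\rho]]$ and symmetrize, whereas the paper Taylor-expands $\hat{U}_{\Delta\vec{\lambda}}$ directly and multiplies out; these yield the identical dissipator.
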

\begin{proof}
We adopt Einstein summation convention, i.e.~repeated indices are summed over. Further, we assume that fluctuations are sufficiently small and expand to first non-trivial order around the mean, $\mathbb{E}[\vec{\lambda}]=\vec{\mu}$. For brevity, we define the fluctuation $\Delta\vec{\lambda}_j\coloneqq \vec{\lambda}_j-\vec{\mu}_j$, such that $\mathbb{E}[\Delta\vec{\lambda}]=0$ and $\bm V_{ij}=\mathbb{E}[\Delta\vec{\lambda}_i\Delta\vec{\lambda}_j]$.

Consider the unitary operator from Eq.~\eqref{eq:unitary_h} and rewrite it as follows:
\begin{equation}
    \hat{U}_{\vec{\lambda}}=e^{-i\vec{\lambda}^\top\hat{\vec{h}}}=e^{-i\Delta\vec{\lambda}^\top\hat{\vec{h}}-i\vec{\mu}^\top\hat{\vec{h}}},
\end{equation}
where $\Delta\vec{\lambda}=\vec{\lambda}-\vec{\mu}$. To expand in small fluctuations about the mean, use the Zassenhaus formula,
\begin{equation}
    e^{\hat{A}+\hat{B}}=e^{\hat{A}}e^{\hat{B}}e^{-\frac{1}{2}\comm{\hat{A}}{\hat{B}}}e^{-\frac{1}{6}\left(2\comm{\hat{B}}{\comm{\hat{A}}{\hat{B}}}+\comm{\hat{A}}{\comm{\hat{A}}{\hat{B}}}\right)}\times\dots,
\end{equation}
where ellipsis denote nested commutators of higher order. We attend to case (1) where the commutators between the generators equals a constant. The result for case (2), i.e.~zero mean $\vec{\mu}=0$, presents itself straightforwardly.

Let $\comm*{\hat{h}_i}{\hat{h}_j}=\bm{c}_{ij}$, where ${\bm c}_{ij}^*=-{\bm c}_{ij}$ are purely imaginary, and take $\hat{A}= -i\vec{\mu}^\top\hat{\vec{h}}$, and $\hat{B}= -i\Delta\vec{\lambda}^\top\hat{\vec{h}}$. Then only the first-order commutator matters and equals $\comm*{\hat{A}}{\hat{B}}=-\vec{\mu}^{\top}\bm c\Delta\vec{\lambda}$, which is purely imaginary. We deduce that
\begin{equation}
    \hat{U}_{\vec{\lambda}}\rho\hat{U}_{\vec{\lambda}}^{\dagger}=\hat{U}_{\vec{\mu}}\hat{U}_{\Delta\vec{\lambda}}\rho\hat{U}_{\Delta\vec{\lambda}}^\dagger\hat{U}_{\vec{\mu}}^\dagger.
\end{equation}
Expand the fluctuations as
\begin{equation}
    \hat{U}_{\Delta\vec{\lambda}}\approx\hat{I}-i\Delta\vec{\lambda}_j\hat{h}_j -\frac{1}{2}\Delta\vec{\lambda}_j\Delta\vec{\lambda}_k\hat{h}_j\hat{h}_k.
\end{equation}
The following set of equalities then hold to $\order{\varepsilon^3}$:
\begin{align}
    \Phi(\rho)
    &=\int\dd{\vec{\lambda}}p(\vec{\lambda})\hat{U}_{\vec{\lambda}}\rho\hat{U}_{\vec{\lambda}}^\dagger \\
    &\approx\int\dd{\vec{\lambda}}p(\vec{\lambda})\hat{U}_{\vec{\mu}}\Bigg(\rho-i\Delta\vec{\lambda}_j\hat{h}_j\rho + i\Delta\vec{\lambda}_j\rho\hat{h}_j + \Delta\vec{\lambda}_j\Delta\vec{\lambda}_k\hat{h}_j\rho\hat{h}_k \\  & 
    \hspace{13em} -\frac{1}{2}\Delta\vec{\lambda}_j\Delta\vec{\lambda}_k\hat{h}_j\hat{h}_k\rho - \frac{1}{2}\Delta\vec{\lambda}_j\Delta\vec{\lambda}_k\rho\hat{h}_j\hat{h}_k \Bigg)\hat{U}_{\vec{\mu}}^\dagger \nonumber\\
    &= \hat{U}_{\vec{\mu}}\left(\rho + {\bm V}_{jk}\hat{h}_j\rho\hat{h}_k - \frac{1}{2}{\bm V}_{jk}\rho\hat{h}_j\hat{h}_k - \frac{1}{2}{\bm V}_{jk}\hat{h}_j\hat{h}_k\rho\right)\hat{U}_{\vec{\mu}}^\dagger,
\end{align}
which concludes the proof.
\end{proof}
Further comments about the random unitary channel expansion are in order:
\begin{itemize}
    \item We properly understand the operator expansion in terms of measurement outcomes. Consider measurements $\{M_j\}$ with $\sum_jM_j=I$. Then, $\Tr\{M_j\Delta\lambda^m\hat{h}^m\rho\}\lesssim \varepsilon^m$. For brevity, we have let $\Delta\lambda^m$ and $\hat{h}^m$ denote any $m$th order product of fluctuations and generators, e.g., $\Delta\lambda^2\sim \bm V$ and  $\hat{h}^2\sim \hat{h}_i\hat{h}_j$, respectively. 
    \item The analysis above applies to programmable (or classically simulable) channels~\cite{Ji2008Programmable,Dobrza2012ElusiveHeisenberg} in the regime of weak noise and finite energy. Metrological bounds for programmable channels, as well as more general noise channels~\cite{Pirandola2017PRL_AdaptiveLimits}, typically apply in asymptotic regimes, e.g.~infinite-energy, and do not necessarily entail whether entangled probes are necessary to attain the precision limits derived therefrom.
    \item For simplicity, we have assumed one Hermitian generator per sensor. We briefly comment how our formalism can be extended to non-Hermitian generators: Consider a single non-Hermitian generator $\hat{A}_j$ per sensor associated with a complex translation $\alpha_j$, so that $ \hat{U}_{\vec{\alpha}} = e^{-i(\vec{\alpha}^\top\hat{\vec{A}}+\text{h.c.})}$. Any non-Hermitian operator can be written as $\hat{A}_j = \hat{h}_j + i \hat{f}_j$, with $\hat{h}_j,\hat{f}_j$ Hermitian, and similarly $\alpha_j = \lambda_j + i \nu_j$ with $\lambda_j,\nu_j$ real (possibly correlated) random variables. One can then map the single non-Hermitian generator with one complex translation to two Hermitian generators with two real translations. The covariance matrix $\bm V$ then becomes $2K\times 2K$, containing variances in $\lambda_j$ and $\nu_j$ and any cross-correlations. The Hamiltonian matrix similarly expands to include intra- and inter-sensor components for the pairs $(\hat{h}_j, \hat{f}_j)$. This extension preserves the structure of the formalism and the entanglement-scaling arguments, though introduces slightly more complexity due possible measurement incompatibility, even at the single-sensor level, due to single-site random variables $\lambda_j,\nu_j$~\cite{Ragy2016PRA_compatibility}.
\end{itemize}

\section{Derivation of the Main Result \label{app:qfim_derive}}

In this section, we prove the Result [Eq.~(\ref{eq:qfim_Vh})] of the main text, which we restate here for convenience: Consider the QSN probe $\rho=\dyad{\psi}$, and let $\bm{\mathcal{H}}$ denote the ``Hamiltonian'' generator matrix of $\rho$, with elements $\bm{\mathcal{H}}_{ij}= \expval*{\hat{h}_i\hat{h}_j}-\expval*{\hat{h}_i}\expval*{\hat{h}_j}$. Let $\Theta=\{\vartheta_I\}_{I=1}^n$ be the set of unknown parameters that we aim to estimate. We deduce a direct relation between the QFI matrix $\bm{\mathcal{F}}_{\mathcal{Q}}(\Theta)$, the covariance matrix of the channel $\bm V$, and the generator matrix $\bm{\mathcal{H}}$:
\begin{equation}\label{appeq:Vqfim}
    \sum_{I,J=1}^n\left(\bm{\mathcal{F}}_{\mathcal{Q}}(\Theta)\right)_{IJ}\vartheta_I\vartheta_J =4\Tr{\bm V\bm{\mathcal{H}}}.
\end{equation} 
We note that $\Tr{\bm V\bm{\mathcal{H}}}\sim\order{K^2\varepsilon^2}$, in accordance with the small parameter $\varepsilon$ introduced in Appendix~\ref{app:channel_derive}. To prove the result, we apply the following lemma.
\begin{lemma}\label{lemma:fidelity}
Let $\rho_{\bm V}=\Phi_{\bm V}(\rho)$ and assume an input pure state $\rho=\dyad{\psi}$ for the probe. The input-output fidelity of the quantum channel $\Phi_{\bm V}$ is approximately
    \begin{equation}\label{appeq:fidelity_V}
        F(\rho,\rho_{\bm V}) \approx 1-\Tr{\bm V\bm{\mathcal{H}}},
    \end{equation}
where $\bm{\mathcal{H}}$ is the Hamiltonian covariance matrix in the local basis,
    \begin{equation}\label{eq:hamiltonian_matrix}
        \bm{\mathcal{H}}_{ij}=\expval*{\hat{h}_i\hat{h}_j}-\expval*{\hat{h}_i}\expval*{\hat{h}_j},
    \end{equation}
with all expectation values evaluated with respect to the pure probe state $\rho=\dyad{\psi}$.
\end{lemma}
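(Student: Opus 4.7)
The plan is to evaluate $F(\rho,\rho_{\bm V})$ by direct substitution into the channel expansion~\eqref{eq:channel_expansion}, exploiting the fact that the input is pure. For a pure state, the quantum fidelity collapses to the simple overlap $F(\dyad{\psi},\sigma) = \bra{\psi}\sigma\ket{\psi}$, so we need only compute the diagonal matrix element of $\Phi_{\bm V}(\dyad{\psi})$ in the state $\ket{\psi}$. This bypasses any square-root operator manipulations that typically make fidelity bounds awkward.

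First I would apply $\bra{\psi}\cdot\ket{\psi}$ term-by-term to the channel expansion. The identity piece contributes $\braket{\psi}{\psi}=1$. The ``sandwich'' term yields $\bra{\psi}\hat{h}_i\dyad{\psi}\hat{h}_j\ket{\psi} = \expval*{\hat{h}_i}\expval*{\hat{h}_j}$, since the projector $\dyad{\psi}$ factors the expectation. The anticommutator term gives $\bra{\psi}\acomm{\hat{h}_i\hat{h}_j}{\dyad{\psi}}\ket{\psi} = 2\expval*{\hat{h}_i\hat{h}_j}$, using $\braket{\psi}{\psi}=1$ again. Assembling these three contributions with the coefficients $\bm V_{ij}$ yields
\begin{equation}
    F(\rho,\rho_{\bm V}) \approx 1 + \sum_{i,j=1}^K\bm V_{ij}\Bigl(\expval*{\hat{h}_i}\expval*{\hat{h}_j} - \expval*{\hat{h}_i\hat{h}_j}\Bigr) = 1 - \sum_{i,j=1}^K \bm V_{ij}\bm{\mathcal{H}}_{ij}.
\end{equation}

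Finally I would recognize the double sum as a matrix trace. Because the framework assumes commuting generators, $\bm{\mathcal{H}}$ is symmetric (as is $\bm V$ by construction), so $\sum_{i,j}\bm V_{ij}\bm{\mathcal{H}}_{ij} = \Tr{\bm V\bm{\mathcal{H}}}$, giving Eq.~\eqref{appeq:fidelity_V}. There is no real obstacle here — the whole argument rests on the cancellation between the ``sandwich'' and ``anticommutator'' pieces collapsing onto the centered second moment $\bm{\mathcal{H}}_{ij}$. The only point requiring care is consistency of the expansion order: since $\bm V_{ij}\sim\varepsilon^2$, the expression above is accurate to $\order{\varepsilon^2}$, and higher-order moments of $p(\vec\lambda)$ contribute only at $\order{\varepsilon^3}$ per Lemma~\ref{lemma:expand}, so they can be dropped consistently with the approximation used to derive Eq.~\eqref{eq:channel_expansion}. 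Combining this lemma with the Bures relation~\eqref{eq:bures} then immediately produces the Result.
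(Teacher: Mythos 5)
Your proposal is correct and follows essentially the same route as the paper's own proof of Lemma~\ref{lemma:fidelity}: reduce the fidelity to the overlap $\expval{\Phi_{\bm V}(\dyad{\psi})}{\psi}$, evaluate the identity, sandwich, and anticommutator terms, and recognize the resulting double sum as $\Tr{\bm V\bm{\mathcal{H}}}$ (note that symmetry of $\bm V$ alone already suffices for this last step). Nothing to add.
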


\begin{proof}
Recall the approximate form of the quantum channel $\Phi_{\bm V}$ in Eq.~(\ref{appeq:ch_expand}). Then, determine the fidelity between the input $\rho=\dyad{\psi}$ and the output $\rho_{\bm V}=\Phi_{\bm V}(\dyad{\psi})$:
    \begin{align}
        F(\rho,\rho_{\bm V})&= \expval{\Phi_{\bm V}\left(\dyad{\psi}\right)}{\psi} \\
        &\approx \ip{\psi}{\psi}^2 + \sum_{i,j} \bm V_{ij}\left(\expval{\hat{h}_i(\dyad{\psi})\hat{h}_j}{\psi} - \frac{1}{2}\expval{\acomm{\hat{h}_i\hat{h}_j}{\dyad{\psi}}}{\psi}\right) \\
        &= 1 + \sum_{i,j} \bm V_{ij}\left(\expval{\hat{h}_i}{\psi}\expval{\hat{h}_j}{\psi} - \frac{1}{2}\left(\expval{\hat{h}_i\hat{h}_j}{\psi}\ip{\psi}{\psi}+\ip{\psi}\expval{\hat{h}_i\hat{h}_j}{\psi}\right)\right) \\ 
        &= 1 + \sum_{i,j} \bm V_{ij}\left(\expval{\hat{h}_i}{\psi}\expval{\hat{h}_j}{\psi} - \expval{\hat{h}_i\hat{h}_j}{\psi}\right) \\ 
        &= 1 - \sum_{i,j} \bm V_{ij}\left(\expval{\hat{h}_i\hat{h}_j}- \expval{\hat{h}_i}\expval{\hat{h}_j}\right) \\
        &= 1- \Tr{\bm{\mathcal{H}}\bm V},
    \end{align}
where $\expval*{\hat{O}}=\expval*{\hat{O}}{\psi}$. 
\end{proof}

We now prove the main result. 
\begin{proof}
Recall the geometric relation between the QFI matrix and the fidelity via Bures distance~\cite{Liu2020QFIM} for any set of parameters $\{\varphi_I\}$, ${\sum_{I,J}\left(\bm{\mathcal{F}}_{\mathcal{Q}}\right)_{IJ}\delta\varphi_I\delta\varphi_J=8(1-\sqrt{F(\rho_\varphi,\rho_{\varphi+\delta\varphi})})}$,
where $F(\rho,\tau)={\rm Tr}[\sqrt{\sqrt{\rho}\tau\sqrt{\rho}}\,]^2$ is the fidelity. For one state pure, say $\tau=\dyad{\psi}$, the fidelity is simply the overlap, $F(\rho,\dyad{\psi})=\expval{\rho}{\psi}$. In our work, we estimate small (positive) fluctuations ($\vartheta>0$) from the identity map, so that
\begin{equation}\label{appeq:bures}
    \sum_{I,J=1}^n\left(\bm{\mathcal{F}}_{\mathcal{Q}}(\Theta)\right)_{IJ}\vartheta_I\vartheta_J=8(1-\sqrt{F(\rho_0,\rho_{\vartheta})}),
\end{equation}
where $\rho_0=\dyad{\psi}$ and $\rho_{\vartheta}=\Phi_{\bm V}(\rho_0)$. Using Lemma~\ref{lemma:fidelity}, we find that 
\begin{equation}
    F(\rho_0,\rho_\vartheta)\approx 1-\Tr{\bm V\bm{\mathcal{H}}} \implies 1-\sqrt{F(\rho_0,\rho_{\vartheta})} \approx \Tr{\bm V\bm{\mathcal{H}}}/2.
\end{equation}
Substituting the latter relation into Eq.~\eqref{appeq:bures}, we obtain the result.
\end{proof}

\section{No-Go Entanglement Advantage in Correlated Dephasing Estimation with Entangled States Generated by (Passive) Linear Optics}
\label{app:nogo_gaussian}
In this section, we derive a no-go result for entanglement advantage in estimating correlated (bosonic or fermionic) dephasing with entangled states generated by passive linear optics (e.g., entangled Gaussian bosonic states).

\begin{proposition}
    Entangled probes generated by propagating local, separable probes through a (passive) linear optical network perform no better than separable probes in estimating maximally correlated dephasing.
\end{proposition}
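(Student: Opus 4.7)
The plan is to exploit the fact that any passive linear optical unitary $\hat U_{\rm lin}$ commutes with the total excitation number operator $\hat N = \sum_{i=1}^K \hat n_i$, which collapses the entangled-probe QFI onto that of its separable seed. First, I would invoke the Corollary [Eq.~(\ref{eq:qfi_g})] with the maximally-correlated choice $\bm v = \vec 1\vec 1^\top$ (every entry equal to one, per $\bm v = K\vec u\vec u^\top$) and local dephasing generators $\hat h_i = \hat n_i$. Then
\begin{equation*}
\Tr(\bm v\bm{\mathcal H}) = \sum_{i,j=1}^K\!\bigl(\expval*{\hat n_i \hat n_j} - \expval*{\hat n_i}\expval*{\hat n_j}\bigr) = {\rm Var}(\hat N),
\end{equation*}
yielding $\mathcal F_{\mathcal Q}^{\rm ent}(g) = 4\,{\rm Var}_\psi(\hat N)$ for any pure probe $\ket\psi$.

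Next, I would write a generic passive-linear-optical probe as $\ket\psi = \hat U_{\rm lin}\bigl(\bigotimes_{i=1}^K\ket{\psi_i}\bigr)$, and recall that passive linear networks act on bosonic (or fermionic) mode operators as $\hat a_i \mapsto \sum_j \bm U_{ij}\hat a_j$ with $\bm U$ unitary. Since such transformations are generated by number-conserving bilinears $\hat a_i^\dagger \hat a_j$, they leave $\hat N$ invariant: $[\hat U_{\rm lin},\hat N]=0$. Hence ${\rm Var}_\psi(\hat N) = {\rm Var}_{\psi_0}(\hat N)$, where $\ket{\psi_0}\coloneqq\bigotimes_i\ket{\psi_i}$ is the separable seed.

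Finally, because $\ket{\psi_0}$ is a product state, the variance of a sum of local number operators factorizes: ${\rm Var}_{\psi_0}(\hat N) = \sum_i {\rm Var}_{\psi_i}(\hat n_i)$. Comparing with Eq.~(\ref{eq:boson_sep}), this is precisely $\mathcal F_{\mathcal Q}^{\rm sep}(g)/4$ evaluated on the separable seed $\ket{\psi_0}$. Since $\hat N$ is conserved, $\ket{\psi_0}$ satisfies the same resource constraint (total occupation, energy) as $\ket\psi$, and is itself a valid separable probe. Chaining these equalities, $\mathcal F_{\mathcal Q}^{\rm ent}(g)[\ket\psi] = \mathcal F_{\mathcal Q}^{\rm sep}(g)[\ket{\psi_0}] \le \max_{\rm sep}\mathcal F_{\mathcal Q}^{\rm sep}(g)$, which is the claimed no-go.

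The main obstacle I anticipate is not mathematical but one of scope: making the number-conservation identity uniform across the physical platforms covered (both bosonic and fermionic dephasing), and clearly demarcating \emph{passive} linear optics from active Gaussian operations. The former reduces to the fact that the bilinears $\hat a_i^\dagger\hat a_j + \text{h.c.}$ commute with $\hat N$ regardless of statistics; the latter is precisely where the argument breaks, since active operations like two-mode squeezing do not conserve $\hat N$ and can inflate ${\rm Var}(\hat N)$. This delineation also explains, consistently with the remark following Eq.~(\ref{eq:boson_ent}), why non-Gaussian (or at least active) resources are required to actually reach the entangled limit.
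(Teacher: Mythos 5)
Your proof is correct and follows essentially the same route as the paper's: both arguments hinge on the fact that a passive linear-optical unitary conserves the total occupation number, so that ${\rm Var}(\hat N)$ (equivalently ${\rm Var}(\hat n_{\rm avg})=\mathrm{Var}(\hat N)/K^2$, the quantity appearing in the paper's Appendix~\ref{app:nogo_gaussian}) is unchanged by the network and then factorizes into $\sum_i{\rm Var}_{\psi_i}(\hat n_i)$ on the product seed, matching the separable QFI. The only cosmetic difference is that you phrase the invariance as $[\hat U_{\rm lin},\hat N]=0$ while the paper conjugates $\hat n_{\rm avg}$ by $U_{\bm B}$ explicitly; the substance is identical.
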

\begin{proof}
    Consider a passive linear-optical network, denoted by the unitary operator $\hat{U}_{\bm B}$, and a product of single-mode (Gaussian or non-Gaussian) resource states, $\ket{\psi_{\rm sep}}=\bigotimes_{i=1}^K\ket{\psi_i}$. Let $\ket{\psi_{\rm ent}}=\hat{U}_{\bm B}\ket{\psi_{\rm sep}}$ denote the multi-mode entangled state generated by linear optics. Now, define the average photon number operator per mode, $\hat{n}_{\rm avg}\coloneqq\sum_{i=1}^K\hat{n}_i/K$. Recall the QFI for dephasing estimation using entangled and separable strategies, respectively: 
     \begin{align}
        \mathcal{F}_{\mathcal{Q}}^{\rm ent}(g)&=4K^2{\rm Var}\left(\hat{n}_{\rm avg}\right),
        \\ 
        \mathcal{F}_{\mathcal{Q}}^{\,\rm sep}(g)&=4K\left(\sum_{i=1}^K {\rm Var}(\hat{n}_i)/K\right).
    \end{align}  
    Proving the proposition thus equates to demonstrating that ${\rm Var}(\hat{n}_{\rm avg})_{\psi_{\rm ent}}={\rm Var}(\hat{n}_{\rm avg})_{\psi_{\rm sep}}=\sum_{i=1}^K{\rm Var}(\hat{n}_i)_{\psi_i}/K^2$, where ${\rm Var}(\hat{n}_i)_{\psi_{i}}$ represent the occupation variances of the local states $\psi_{i}$. This follows straightforwardly because, for any state $\ket{\varphi}=\hat{U}_{\bm B}\ket{\phi}$,
    \begin{align}
        {\rm Var}(\hat{n}_{\rm avg})_{\varphi}&= \expval{\hat{n}_{\rm avg}^2}{\varphi}-\expval{\hat{n}_{\rm avg}}{\varphi}\hspace{-.5em}\expval{\hat{n}_{\rm avg}}{\varphi}\\
        &=\expval{\hat{U}_{\bm B}^\dagger\hat{n}_{\rm avg}^2\hat{U}_{\bm B}}{\phi}-\expval{\hat{U}_{\bm B}^\dagger\hat{n}_{\rm avg}\hat{U}_{\bm B}}{\varphi}\hspace{-.5em}\expval{\hat{U}_{\bm B}^\dagger\hat{n}_{\rm avg}\hat{U}_{\bm B}}{\varphi}\\
        &= \expval{\hat{n}_{\rm avg}^2}{\phi}-\expval{\hat{n}_{\rm avg}}{\phi}\hspace{-.5em}\expval{\hat{n}_{\rm avg}}{\phi}\\
        &={\rm Var}(\hat{n}_{\rm avg})_{\phi},
    \end{align}
    where the penultimate line derives from the fact that $\hat{U}_{\bm B}$ represents a passive transformation that conserves $\hat{n}_{\rm avg}$. Thus, ${\rm Var}(\hat{n}_{\rm avg})_{\psi_{\rm ent}}={\rm Var}(\hat{n}_{\rm avg})_{\psi_{\rm sep}}$, implying that the (separable) product state $\ket{\psi_{\rm sep}}=\bigotimes_i\ket{\psi_{i}}$, achieves equal performance as the entangled state $\ket{\psi_{\rm ent}}=\hat{U}_{\bm B}\ket{\psi_{\rm sep}}$.
\end{proof}

\begin{corollary*}
        Entangled Gaussian bosonic probes perform equally well as separable Gaussian bosonic probes in estimating maximally correlated bosonic dephasing.
\end{corollary*}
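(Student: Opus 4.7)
The plan is to reduce every pure Gaussian probe to an equivalent separable Gaussian probe via the Bloch-Messiah decomposition, after which the preceding proposition yields the claim immediately. First, I would argue by convexity of the QFI in the input state that attention can be restricted to pure Gaussian probes, since mixing only lowers the QFI and therefore cannot open any gap between entangled and separable Gaussian strategies.

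Next, I would invoke the Bloch-Messiah (Euler) decomposition of the real symplectic group, which guarantees that any pure $K$-mode Gaussian state admits the factorization
\begin{equation}
    \ket{\psi_G} = \hat{U}_{\bm B}\bigotimes_{i=1}^{K}\ket{\phi_i},
\end{equation}
where $\hat{U}_{\bm B}$ is a passive (photon-number conserving) linear-optical unitary and each $\ket{\phi_i}$ is a single-mode displaced squeezed vacuum (hence itself Gaussian). This is precisely the hypothesis of the preceding proposition with $\ket{\psi_{\rm sep}} = \bigotimes_i \ket{\phi_i}$. Applying the proposition gives
\begin{equation}
    {\rm Var}(\hat{n}_{\rm avg})_{\psi_G} = {\rm Var}(\hat{n}_{\rm avg})_{\psi_{\rm sep}},
\end{equation}
so the QFI for maximally correlated dephasing evaluated on the entangled Gaussian probe $\ket{\psi_G}$ equals that evaluated on the separable Gaussian probe $\ket{\psi_{\rm sep}}$ extracted from its Bloch-Messiah factorization.

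The only delicate bookkeeping is ensuring that whatever resource constraint is imposed on $\ket{\psi_G}$---e.g., a total mean-occupation bound $\sum_i \expval*{\hat{n}_i} \leq K\bar{n}$---transfers to $\ket{\psi_{\rm sep}}$. Since $\hat{U}_{\bm B}$ commutes with the total photon-number operator $\sum_i \hat{n}_i$, any constraint on total mean photon number is automatically preserved under the decomposition, and no energy advantage can be hidden inside the passive network. I expect this to be the main (and only) point of potential subtlety: the decomposition itself is standard, but one must verify that the class of constraints under which entangled and separable Gaussian probes are being compared is invariant under passive linear optics. Granted that invariance, the corollary follows at once from Bloch-Messiah together with the preceding proposition, with no additional computation required.
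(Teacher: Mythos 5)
Your proposal is correct and follows essentially the same route as the paper: the paper's one-line proof of the corollary is precisely the observation that any multi-mode entangled Gaussian state arises from single-mode squeezed coherent states passed through a passive linear-optical network (i.e., the Bloch--Messiah decomposition you invoke), after which the preceding proposition applies. Your additional remarks on convexity and on the invariance of the total-occupation constraint under passive transformations are sound bookkeeping that the paper leaves implicit.
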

This follows from the fact that we may construct any (multi-mode) entangled Gaussian bosonic state by passing single-mode Gaussian (i.e., squeezed coherent) states through a passive linear optical network~\cite{Serafini17QCV}.

\section{Multiple Noise Parameters}
\label{app:multi-parameter}

In this section, we derive the QFI matrix [Eq.~\eqref{eq:multiparam_bd}] for multiple noise parameters; see the Claim in the main text. Consider an invertible transformation matrix $\bm W$, and define the new covariance
\begin{equation}
    \bm V^\prime\coloneqq \bm W\bm V\bm W^\top. 
\end{equation}
Parametrize $\bm V^\prime$ via 
\begin{equation}\label{eq:V_params}
    \bm V^\prime=
    \begin{pmatrix}
        \xi_1^2 & \xi_1\xi_2\mathscr{C}_{12} & \hdots & \xi_1\xi_K \mathscr{C}_{1K} \\
        \xi_1\xi_2\mathscr{C}_{12} & \xi_2^2 & \hdots & \xi_2\xi_K \mathscr{C}_{2K} \\
        \vdots & \vdots & \ddots & \vdots \\
        \xi_1\xi_K \mathscr{C}_{1K} & \xi_2\xi_K \mathscr{C}_{2K} & \hdots & \xi_K^2
    \end{pmatrix},
\end{equation}
where $\xi_I^2\coloneqq \left(\bm W \bm V\bm W^\top\right)_{II}$ denote (real) collective fluctuation parameters. The off-diagonal numbers $\mathscr{C}_{IJ}$ are residual correlation coefficients, which we take as nuisance parameters here. We thus focus on estimates of the collective fluctuations $\Xi=\{\xi_I\}_{I=1}^K$, for which we find the following [note that $\bm{W}^{-\top} = (\bm{W}^{\top})^{-1}$].
\begin{theorem}\label{thm:qfim}
    Consider parametrization $\bm V^\prime$ as in Eq.~\eqref{eq:V_params}. The QFI matrix for parameters $\Xi\coloneqq\{\xi_I\}_{I=1}^K$ is
        \begin{equation}\label{appeq:qfim}
        \left(\bm{\mathcal{F}}_{\mathcal{Q}}(\Xi)\right)_{IJ} =\begin{cases}
                4(\bm W^{-\top}\bm{\mathcal{H}}\bm W^{-1})_{II}, & I=J \\                8\mathscr{C}_{IJ}(\bm W^{-\top}\Re{\bm{\mathcal{H}}}\bm W^{-1})_{IJ}, & I\neq J,
            \end{cases}
        \end{equation}
    where $\Re{\bm{\mathcal{H}}}=(\bm{\mathcal{H}} + \bm{\mathcal{H}}^*)/2$. If $\bm W$ is orthogonal with $\bm W=(\vec{w}_1,\vec{w}_2,\dots)^\top$, then we rewrite the QFI matrix as
        \begin{equation}
            \left(\bm{\mathcal{F}}_{\mathcal{Q}}(\Xi)\right)_{IJ} =
            \begin{cases}
                4\left(\vec{w}_I^{\top}\bm{\mathcal{H}}\vec{w}_I\right), & I=J \\                8\mathscr{C}_{IJ}(\vec{w}_I^{\top}\Re{\bm{\mathcal{H}}}\vec{w}_J), & I\neq J
            \end{cases}.
        \end{equation}

\end{theorem}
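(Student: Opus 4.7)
The plan is to reduce the multi-parameter QFIM computation to the scalar identity~\eqref{eq:qfim_Vh} already proved as the main Result. The first step is to invert the defining relation for the collective covariance, writing $\bm V = \bm W^{-1}\bm V'\bm W^{-\top}$, so that cyclicity of the trace yields
\[
    \Tr(\bm V\,\bm{\mathcal H}) = \Tr(\bm V'\,\bm{\mathcal H}'), \qquad \bm{\mathcal H}' \coloneqq \bm W^{-\top}\bm{\mathcal H}\bm W^{-1}.
\]
Because $\bm W$ is real, $\bm{\mathcal H}'$ inherits Hermiticity from $\bm{\mathcal H}$, while $\bm V'$ is real symmetric by construction of the parametrization.

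Next, I would insert the explicit parametrization $\bm V'_{II}=\xi_I^2$ and $\bm V'_{IJ}=\xi_I\xi_J\mathscr C_{IJ}$ for $I\neq J$ into this trace and split diagonal from off-diagonal contributions,
\[
    \Tr(\bm V'\,\bm{\mathcal H}') = \sum_I \xi_I^2\,\bm{\mathcal H}'_{II} + \sum_{I\neq J} \xi_I\xi_J\,\mathscr C_{IJ}\,\bm{\mathcal H}'_{JI}.
\]
Pairing each off-diagonal index pair $(I,J)$ with its partner $(J,I)$ and using $\bm{\mathcal H}'_{JI}=(\bm{\mathcal H}'_{IJ})^*$ together with the $I\leftrightarrow J$ symmetry of $\mathscr C_{IJ}$ and $\xi_I\xi_J$, only the symmetric real part of $\bm{\mathcal H}'$ survives---the antisymmetric imaginary part is killed by the symmetric weighting. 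Equivalently, for any real symmetric $\bm A$ and Hermitian $\bm B$, $\Tr(\bm A \bm B)=\Tr(\bm A\,\Re\bm B)$, which does the bookkeeping in one line.

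At this stage the main Result reads as an equality of two quadratic forms in $\Xi$, and the matrix elements $(\bm{\mathcal F}_{\mathcal Q})_{IJ}$ follow by matching coefficients, or equivalently by evaluating $\partial_{\xi_I}\partial_{\xi_J}$ of both sides and invoking the symmetry of $\bm{\mathcal F}_{\mathcal Q}$. The diagonal entries come out to $(\bm{\mathcal F}_{\mathcal Q})_{II}=4\,\bm{\mathcal H}'_{II}$, and the off-diagonal entries to a coefficient times $\mathscr C_{IJ}\,(\Re\bm{\mathcal H}')_{IJ}$. Substituting $\bm{\mathcal H}'=\bm W^{-\top}\bm{\mathcal H}\bm W^{-1}$ gives the general formula stated in the theorem; for orthogonal $\bm W=(\vec w_1,\vec w_2,\dots)^\top$ one has $\bm W^{-1}=\bm W^\top$, hence $\bm{\mathcal H}'_{IJ}=\vec w_I^\top\bm{\mathcal H}\vec w_J$, recovering the row/column vector expression in the second form of the theorem.

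The main obstacle lies in the careful bookkeeping of the off-diagonal sector: one must correctly combine (i) the factor of two from pairing $(I,J)$ with $(J,I)$ in the trace expansion, (ii) the Hermiticity of $\bm{\mathcal H}$ restricting the surviving piece to $\Re\bm{\mathcal H}$, and (iii) the multiplicity convention of the ordered double sum on the left-hand side of~\eqref{eq:qfim_Vh}. Once these are tracked consistently, the remainder is elementary linear algebra.
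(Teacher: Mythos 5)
Your proposal is correct and follows the paper's proof essentially step for step: the same cyclic-trace identity $\Tr(\bm V\bm{\mathcal H})=\Tr(\bm V^\prime\bm{\mathcal H}^\prime)$ with $\bm{\mathcal H}^\prime=\bm W^{-\top}\bm{\mathcal H}\bm W^{-1}$, the same diagonal/off-diagonal split of the parametrized trace, the same use of $\mathscr{C}_{JI}=\mathscr{C}_{IJ}$ and $\bm{\mathcal H}^\prime_{JI}=(\bm{\mathcal H}^\prime_{IJ})^*$ to reduce the off-diagonal sector to $\Re{\bm{\mathcal H}^\prime}$, the same coefficient matching against Eq.~\eqref{eq:qfim_Vh}, and the same orthogonal-$\bm W$ specialization. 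The one loose end is that you leave the off-diagonal coefficient unresolved (``a coefficient times $\mathscr{C}_{IJ}(\Re{\bm{\mathcal H}^\prime})_{IJ}$''), and that coefficient is the only nontrivial numerical content of the off-diagonal claim: if you match the two quadratic forms in the standard symmetric way, splitting the contribution of each unordered pair equally between $(\bm{\mathcal F}_{\mathcal Q})_{IJ}$ and $(\bm{\mathcal F}_{\mathcal Q})_{JI}$, you obtain $4\mathscr{C}_{IJ}(\Re{\bm{\mathcal H}^\prime})_{IJ}$, whereas the stated factor of $8$ corresponds to assigning the full unordered-pair weight $4\mathscr{C}_{IJ}(\bm{\mathcal H}^\prime_{IJ}+\bm{\mathcal H}^\prime_{JI})$ to the single entry $(\bm{\mathcal F}_{\mathcal Q})_{IJ}$, which is effectively how the paper ``matches like terms.'' You correctly flagged this as the delicate bookkeeping point, but a complete write-up must commit to one of these two conventions and justify it, since they differ by a factor of two.
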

\begin{proof}
    Recall the Bures distance~\eqref{appeq:bures}. Introduce the invertible matrix $\bm W$ into Eq.~\eqref{appeq:bures} to determine the following relations:
    \begin{align}
        \Tr{\bm V\bm{\mathcal{H}}} &= \Tr{(\bm W\bm V\bm W^\top)(\bm W^{-\top}\bm{\mathcal{H}}\bm W^{-1})}\\
        &= \Tr{\bm V^\prime(\bm W^{-\top}\bm{\mathcal{H}}\bm W^{-1})} \\
        &= \sum_{I,J}\bm V^\prime_{IJ}(\bm W^{-\top}\bm{\mathcal{H}}\bm W^{-1})_{IJ} \\
        &=\sum_{I}\xi_I^2(\bm W^{-\top}\bm{\mathcal{H}}\bm W^{-1})_{II} + \sum_{\substack{I,J \\ I\neq J}}\xi_I\xi_J\mathscr{C}_{IJ}(\bm W^{-\top}\bm{\mathcal{H}}\bm W^{-1})_{IJ}.
        \label{eq:Vh_prime}
    \end{align}
    Note that $\mathscr{C}_{JI}=\mathscr{C}_{IJ}$ and $\bm{\mathcal{H}}_{JI}=\bm{\mathcal{H}}_{IJ}^*$. Substituting Eq.~\eqref{eq:Vh_prime} into the right hand side of Eq.~\eqref{eq:qfim_Vh} and matching like terms on both sides of the equality, we deduce that 
    \begin{equation}
        \left(\bm{\mathcal{F}}_{\mathcal{Q}}(\Xi)\right)_{IJ} =\begin{cases}
                4(\bm W^{-\top}\bm{\mathcal{H}}\bm W^{-1})_{II}, & I=J \\                8\mathscr{C}_{IJ}(\bm W^{-\top}\Re{\bm{\mathcal{H}}}\bm W^{-1})_{IJ}, & I\neq J,
            \end{cases}
    \end{equation} 
    where $\Re{\bm{\mathcal{H}}}=(\bm{\mathcal{H}} + \bm{\mathcal{H}}^*)/2$. 
    
    Furthermore, suppose that $\bm W$ is orthogonal such that $\bm W^{\top}=\bm W^{-1}$, and write $\bm W=(\vec{w}_1, \vec{w}_2,\dots)^\top$. Then ${\bm W^{-\top}\bm{\mathcal{H}}\bm W^{-1}=\bm W\bm{\mathcal{H}}\bm W^{\top}}$ and
    \begin{align}
        \bm W\bm{\mathcal{H}}\bm W^{\top} &= 
        \begin{pmatrix}
            \vec{w}_1^\top \\
            \vec{w}_2^\top \\ 
            \vdots
        \end{pmatrix}
        \bm{\mathcal{H}}
        \begin{pmatrix}
            \vec{w}_1 & \vec{w}_2 & \hdots
        \end{pmatrix} \\
        & = 
        \begin{pmatrix}
            \vec{w}_1^\top\bm{\mathcal{H}}\vec{w}_1 & \vec{w}_1^\top\bm{\mathcal{H}}\vec{w}_2 & \hdots \\
            \vec{w}_2^\top\bm{\mathcal{H}}\vec{w}_1 & \vec{w}_2^\top\bm{\mathcal{H}}\vec{w}_2 & \hdots \\
            \vdots & \vdots & \ddots
        \end{pmatrix}.
    \end{align}
    Whence, $(\bm W^{-\top}\bm{\mathcal{H}}\bm W^{-1})_{IJ}=\vec{w}_I^\top\bm{\mathcal{H}}\vec{w}_J$,
    which concludes the proof.
\end{proof}

\section{Backgrounds Trigger Rayleigh's Curse}
\label{app:curse}
In this section, we extend our derivations to include background (e.g., thermal) fluctuations, $(\bm\Sigma)_{ij}\coloneqq\expval{\Delta\lambda_i\Delta\lambda_j}_{\rm bkg}$, that distinguish themselves from the desired parameters of $\bm V$. We assume the background fluctuations to arise from the same generators as $\bm V$, i.e. representing parallel decoherence~\cite{Sekatski2017Quantum_FullFast}. Practically, such backgrounds must be characterized and subtracted off in post-processing (a common practice when measuring feeble signals, such as searches for new physics~\cite{Bass2024NatRvw}). Regarding quantum limits, although backgrounds do not stymie an entanglement advantage for the correlated noise estimation problem, relatively large backgrounds do lead to the phenomenon of Rayleigh's curse~\cite{Tsang2016Superresolution, Gardner2024StochEst}, as we demonstrate below for the multi-parameter problem. 
\begin{proposition}\label{prop:rayleigh}
    Consider mixing background fluctuations, $\bm\Sigma$, into the signal, such that $\bm V_{ij}\rightarrow \bm\Sigma_{ij}+\bm V_{ij}$ and suppose both weak signal and weak background fluctuations. Given $\xi_I^2\coloneqq\bm (\bm W\bm V\bm W^{\top})_{II}$ and $\sigma_I^2\coloneqq(\bm W\bm{\Sigma}\bm W^{\top})_{II}$, the QFI for the collective fluctuations $\Xi=\{\xi_I\}_{I=1}^K$ is
    \begin{equation}\label{eq:rayleigh}
        (\bm{\mathcal{F}}_{Q}(\Xi;\sigma))_{II}=\left(\frac{4\xi_I^2}{\sigma_I^2+\xi_I^2}\right)(\bm W\bm{\mathcal{H}}\bm W^{\top})_{II}.
    \end{equation}
    The QFI suffers from Rayleigh's curse~\cite{Tsang2016Superresolution,Gardner2024StochEst}---i.e., the QFI vanishes (equivalently, the estimation error diverges) as $\xi_I\rightarrow 0$ for all probes and measurements given $\sigma_I>0$.
\end{proposition}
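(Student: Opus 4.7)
The plan is to extend the mixed-state Fisher analysis of Theorem~\ref{thm:qfim} to include background fluctuations by (i) writing the output of the augmented channel $\Phi_{\bm\Sigma+\bm V}$ in the block structure used in Appendix~\ref{app:qfim_derive}, and then (ii) applying the standard spectral formula for the QFI of a mixed state. The starting point is the leading-order expansion $\rho_{\bm\Sigma+\bm V}=\dyad{\psi}+\sum_{ij}(\bm\Sigma+\bm V)_{ij}L_{ij}(\dyad{\psi})$ with $L_{ij}(\rho)=\hat h_i\rho\hat h_j-\tfrac12\acomm{\hat h_i\hat h_j}{\rho}$. Introducing $\ket{\alpha_i}=(\hat h_i-\expval*{\hat h_i})\ket{\psi}$, a calculation parallel to Lemma~\ref{lemma:fidelity} shows that, in the orthogonal decomposition $\mathscr{H}^{\otimes K}=\mathbb{C}\ket{\psi}\oplus\ket{\psi}^{\perp}$, the state has dominant block $p_\psi\approx 1-\Tr{(\bm\Sigma+\bm V)\bm{\mathcal{H}}}$ and complement block $M_{\bm\Sigma+\bm V}=\sum_{ij}(\bm\Sigma+\bm V)_{ij}\dyad{\alpha_i}{\alpha_j}$, with cross blocks of the same (small) order.

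Next, I would rotate to the collective-mode basis. With $\bm W$ orthogonal and $\ket{\beta_I}=\sum_i(\vec w_I)_i\ket{\alpha_i}$, and for clarity momentarily ignoring the nuisance coefficients $\mathscr{C}_{IJ}$, the additive structure yields $M_{\bm\Sigma+\bm V}=\sum_I(\xi_I^2+\sigma_I^2)\dyad{\beta_I}$ with $\ip{\beta_I}{\beta_J}=(\bm W\bm{\mathcal{H}}\bm W^\top)_{IJ}$. When $\bm W$ diagonalizes $\bm{\mathcal{H}}$, the vectors $\ket{\beta_I}/\sqrt{H_I}$ with $H_I\equiv(\bm W\bm{\mathcal{H}}\bm W^\top)_{II}$ are orthonormal eigenstates of $M_{\bm\Sigma+\bm V}$ with eigenvalues $p_I=(\xi_I^2+\sigma_I^2)H_I$. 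Crucially, $\partial_{\xi_I}M_{\bm\Sigma+\bm V}=2\xi_I\dyad{\beta_I}$ is rank one and supported on a single eigenvector.

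Applying the spectral QFI formula $(\bm{\mathcal{F}}_{\mathcal{Q}})_{II}=2\sum_{k,l}|\mel{\phi_k}{\partial_{\xi_I}\rho_{\bm\Sigma+\bm V}}{\phi_l}|^2/(p_k+p_l)$ then produces three families of contributions. The $(\ket\psi,\ket\psi)$ eigenpair contributes $\approx 4\xi_I^2H_I^2$ with denominator close to unity; the cross $(\ket\psi,\ket{\beta_I}/\sqrt{H_I})$ pairs contribute at the same order (with a prefactor set by third cumulants of $\hat h$); whereas the complement-block $(I,I)$ pair gives $(2\xi_IH_I)^2/[(\xi_I^2+\sigma_I^2)H_I]=4\xi_I^2H_I/(\xi_I^2+\sigma_I^2)$. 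In the weak-noise regime $(\xi_I^2+\sigma_I^2)H_I\ll 1$, this last term dominates all others, yielding Eq.~\eqref{eq:rayleigh}. Rayleigh's curse then follows immediately from the structure of the formula: at fixed $\sigma_I>0$ the numerator vanishes as $\xi_I\to 0$ while the denominator stays bounded, so the quantum Cram\'er--Rao bound~\eqref{eq:var_qfi} forces $\mathrm{Var}(\xi_I)\to\infty$.

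The main obstacle is cleaning up two loose ends. First, when $\bm W$ does not diagonalize $\bm{\mathcal{H}}$, the $\ket{\beta_I}$ are non-orthogonal and the eigenstructure of $M_{\bm\Sigma+\bm V}$ is more intricate; I would either introduce the $\bm{\mathcal{H}}^{1/2}$-orthonormalization $\ket{e_k}=(\bm{\mathcal{H}}^{-1/2})_{ki}\ket{\alpha_i}$, which recasts the complement block as the matrix $\bm{\mathcal{H}}^{1/2}(\bm\Sigma+\bm V)\bm{\mathcal{H}}^{1/2}$, or solve the Lyapunov equation for the symmetric logarithmic derivative via $L=2\int_0^\infty e^{-tM_{\bm\Sigma+\bm V}}\,\partial_{\xi_I}M_{\bm\Sigma+\bm V}\,e^{-tM_{\bm\Sigma+\bm V}}\,dt$, feeding the rank-one derivative into a bilinear resolvent. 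Second, the nuisance coefficients $\mathscr{C}_{IJ}$ decorate $\partial_{\xi_I}M_{\bm\Sigma+\bm V}$ with off-diagonal terms $\propto \xi_J\mathscr{C}_{IJ}$ that must be shown to drop out of the diagonal $(I,I)$ QFI element at leading order in the weak-noise expansion.
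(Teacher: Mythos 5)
Your proposal is correct in its mechanism and arrives at the right formula, but it takes a genuinely different route from the paper. The paper's proof is a two-line reparametrization argument: the background merely replaces $\bm V$ by $\widetilde{\bm V}=\bm\Sigma+\bm V$, so Theorem~\ref{thm:qfim} applies verbatim to the shifted parameters $\tilde{\xi}_I=(\sigma_I^2+\xi_I^2)^{1/2}$, giving $\bm{\mathcal{F}}_{\mathcal{Q}}(\tilde{\Xi})_{II}=4(\bm W\bm{\mathcal{H}}\bm W^\top)_{II}$; the factor $\xi_I^2/(\sigma_I^2+\xi_I^2)$ then drops out of the diagonal Jacobian $\partial\tilde{\xi}_I/\partial\xi_J=\delta_{IJ}\,\xi_J/\tilde{\xi}_J$ via the Fisher-information transformation rule. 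Because it recycles Theorem~\ref{thm:qfim}, this argument automatically covers general orthogonal $\bm W$ and the nuisance coefficients $\mathscr{C}_{IJ}$ with no extra work. Your route instead computes the mixed-state QFI directly from the spectral decomposition of $\rho_{\bm\Sigma+\bm V}$, and it buys something the paper's argument does not make explicit: it exposes \emph{where} the curse lives, namely in the small eigenvalue $p_I=(\xi_I^2+\sigma_I^2)H_I$ of the complement block that sits in the denominator of the dominant term $(2\xi_I H_I)^2/(2p_I)$; your hierarchy of contributions (the $(\psi,\psi)$ and cross terms being suppressed by a factor $\Tr{\widetilde{\bm V}\bm{\mathcal{H}}}\ll1$) is also correct. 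The price is that, as you acknowledge, your derivation is only complete when $\bm W$ diagonalizes $\bm{\mathcal{H}}$ and $\mathscr{C}_{IJ}=0$; the $\bm{\mathcal{H}}^{1/2}$-orthonormalization you propose would close the first gap, but both loose ends would need to be carried through before the argument matches the generality of the stated proposition, whereas the reparametrization argument sidesteps them entirely. If you intend to keep the spectral route, it would be worth stating the reparametrization identity as a consistency check, since it is the fastest way to see that the off-diagonal nuisance terms cannot contaminate the diagonal element at leading order.
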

\begin{proof}[Proof sketch]
    Additive backgrounds, $\bm\Sigma$, change the noise channel to $\Phi_{\widetilde{\bm V}}$ with $\widetilde{\bm V}=\bm\Sigma+\bm V$. As before, we assume an approximate quantum channel $\Phi_{\widetilde{\bm V}}$, similar to Eq.~\eqref{appeq:ch_expand}, under the additional assumption of weak backgrounds. We then determine the QFI matrix, $\bm{\mathcal{F}}_{\mathcal{Q}}(\tilde{\Theta})$, of the new parameters $\tilde{\vartheta}_I^2\coloneqq(\bm W\widetilde{\bm V}\bm W^\top)_{II}=\sigma_I^2+\vartheta_I^2$, finding $\bm{\mathcal{F}}_{\mathcal{Q}}(\tilde{\Theta})=4(\bm W\bm{\mathcal{H}}\bm W^\top)_{II}$. Finally, we use $\pdv{\tilde{\vartheta}_I}{\vartheta_J}=\delta_{IJ}(\vartheta_J/\tilde{\vartheta}_J)$ and apply the error propagation rule to reckon ${\bm{\mathcal{F}}_{\mathcal{Q}}(\Theta)=\abs{\pdv{\tilde{\vartheta}_I}{\vartheta_I}}^2\bm{\mathcal{F}}_{\mathcal{Q}}(\tilde{\Theta})}$.
\end{proof}

Rayleigh's curse applies equally to the single-parameter estimation problems considered in the main text, i.e.~for $\bm V=g^2\bm v$. This fact follows the same line of reasoning as Proposition~\ref{prop:rayleigh}, but we spell out the details for clarity. 

Let $\widetilde{\bm V}=\bm V+\bm\Sigma$, where $\bm V=g^2\bm v$ and $\bm v=\Tr{\bm v}\vec{v}\vec{v}^\top$. Define the new parameter $\tilde{\vartheta}\coloneqq(\vec{v}^\top\widetilde{\bm V}\vec{v})^{1/2}=(g^2\Tr{\bm v}+\sigma^2)^{1/2}$, where $\sigma^2\coloneqq\vec{v}^\top\bm{\Sigma}\vec{v}$ denotes background fluctuations projected onto the $\vec{v}$ mode. To derive the QFI for $g$, we first find the QFI for $\tilde{\vartheta}$, which is simply $\mathcal{F}_{\mathcal{Q}}(\tilde{\vartheta})=4(\vec{v}^\top\bm{\mathcal{H}}\vec{v})$. Using the error propagation rule for the Fisher information and the fact that $\partial\tilde{\vartheta}/\partial g=g\Tr{\bm v}/\tilde{\vartheta}$, it follows that
\begin{align}
    \mathcal{F}_{\mathcal{Q}}(g)&= \bigg|\pdv{\tilde{\vartheta}}{g}\bigg|^2\mathcal{F}_{\mathcal{Q}}(\tilde{\vartheta}) \\
    &=\left(\frac{g^2\Tr{\bm v}^2}{g^2\Tr{\bm v}+\sigma^2}\right)4(\vec{v}^\top\bm{\mathcal{H}}\vec{v})\\ 
    &= \left(\frac{g^2\Tr{\bm v}}{g^2\Tr{\bm v}+\sigma^2}\right) 4\Tr{\bm v\bm{\mathcal{H}}}.
\end{align}
To arrive at the final equality, we used $\Tr{\bm v\bm{\mathcal{H}}}=\Tr{\bm v}\vec{v}^\top\bm{\mathcal{H}}\vec{v}$. Hence, the emergence of Rayleigh's curse. We note that, since the QFI still scales with the global quantity $\Tr{\bm v\bm{\mathcal{H}}}$, we nonetheless maintain an entanglement advantage over separable strategies, which differs from unitary estimation in the presence of (parallel) background decoherence~\cite{Sekatski2017Quantum_FullFast,Dobrza2017PRX_MarkovNoise,Zhou2018QECmetrology}.

%======================================================================
\section{Multi-parameter Estimation of Many-Body Spin Dephasing Dynamics}
\label{app:spin-example}

In this section we give a concrete illustration of entanglement advantage and multi-parameter tradeoffs for collective-noise sensing with spin systems.  We first treat a pedagogical single-qubit example involving \emph{simultaneous} estimation of $X$ and $Z$ dephasing, which can be achieved with an entangled ancilla. We then explain how the same analysis extends to \emph{collective} dephasing in a $K$-qubit QSN, emphasizing which conceptual ingredients carry over without repeating technical steps.

This example highlights a general feature of multi-parameter noise estimation: even in the weak-noise regime where each parameter separately admits a constructive achievability scheme, the simultaneous estimation of the \emph{individual} single-parameter optima may require additional resources, such as ancillae.

%----------------------------------------------------------------------
\subsection{Single-qubit example}

Consider a single qubit exposed to independent stochastic noise processes generated by Pauli operators $X$ and $Z$, with unknown rates $\gamma_X$ and $\gamma_Z$, over a short interrogation time $t$.  In the weak-noise
(single-jump) regime,
\begin{equation}
    \rho(t)\;\approx\;\rho_0
    + t\Big(
        \gamma_Z\,\mathcal{D}[Z](\rho_0)
        + \gamma_X\,\mathcal{D}[X](\rho_0)
    \Big),
    \qquad
    \mathcal{D}[L](\rho) \coloneqq L\rho L^{\dag}- \frac{1}{2}\{ L^{\dag}L ,\rho\} ,
    \label{eq:singlequbit-XZ-shorttime}
\end{equation}
where we use $L^\dagger L= I $ for Pauli jumps.  For a pure probe $\rho_0=\dyad{\psi}$, this admits a decomposition:
\begin{equation}
    \rho(t)\;\approx\;
    \bigl(1-p_Z-p_X\bigr)\,\dyad{\psi}
    + p_Z\,\dyad{\psi_Z}
    + p_X\,\dyad{\psi_X}
    \;+\;O(t^2),
    \label{eq:singlequbit-XZ-branch}
\end{equation}
where
\begin{equation}
    \ket{\psi_Z}  \coloneqq  Z\ket{\psi},\qquad
    \ket{\psi_X}  \coloneqq  X\ket{\psi},
    \qquad
    p_Z=\gamma_Z t,\qquad p_X=\gamma_X t.
    \label{eq:singlequbit-XZ-branches-probs}
\end{equation}

\paragraph{Single-parameter optimality (review).}
If only $\gamma_Z$ is unknown (with $\gamma_X=0$), choosing $\ket{\psi}=\ket{+}$ yields $\ket{\psi_Z}=\ket{-}\perp\ket{+}$, where $\ket{\pm}=(\ket{\uparrow} \pm \ket{\downarrow})/\sqrt{2}$. Then the POVM $\{ M_+,\, I - M_+\}$ with $ M_+=\dyad{+}$ returns
\begin{equation}
    \Pr(\text{$Z$-jump})
    = 1-\Tr( M_+\rho(t))
    \;\approx\;
    \gamma_Z t.
\end{equation}
Hence, we infer $\gamma_Z$ at leading order from the Bernoulli statistics formed by measurement outcomes. This measurement strategy is optimal (i.e., achieves the highest possible Fisher information) over all probe states and measurement bases.  Similarly, if only $\gamma_X$ is nontrivial and unknown, choosing $\ket{\psi}=\ket{\uparrow}$ yields $X\ket{\psi}=\ket{\downarrow}\perp\ket{\uparrow}$ and the POVM $\{\dyad{\uparrow},\, I -\dyad{\uparrow}\}$ is optimal in the same sense.

\paragraph{Measurement compatibility for multi-parameter estimation.}
When both $\gamma_Z$ and $\gamma_X$ are unknown, any fixed two-outcome projective measurement of the above form mixes or misses jump events.  For example, with $\ket{\psi}=\ket{+}$ and $ M_+=\dyad{+}$, we find
\begin{align*}
    \Tr( M_+\rho(t))
    &\approx\;
    (1-p_Z-p_X)\underbrace{\Tr( M_+\dyad{\psi})}_{=1}
    + p_Z\underbrace{\Tr( M_+\dyad{\psi_Z})}_{=0}
    + p_X\underbrace{\Tr( M_+\dyad{\psi_X})}_{=1}
     \\
    &= 1-\gamma_Z t.
    \label{eq:singlequbit-mixing-example}
\end{align*}
This measurement optimally estimates $\gamma_Z$ but is
insensitive to $\gamma_X$.  Conversely, the choice $\ket{\psi}=\ket{\uparrow}$ estimates $\gamma_X$ but not $\gamma_Z$. Intuitively, attaining the \emph{individual} single-parameter optima for both rates requires a single POVM that distinguishes the three branches $\{\ket{\psi},\,Z\ket{\psi},\,X\ket{\psi}\}$ at leading order. This is impossible on a single qubit, since the Hilbert space is two-dimensional and cannot support three mutually orthogonal branch states.

\paragraph{Ancilla-assisted branch separation.}
The branch obstruction is lifted by enlarging the Hilbert space.  Consider a single ancilla qubit and prepare $\ket{+}_S\otimes\ket{\uparrow}_A$.  Apply an encoding unitary $\hat{U}$ (e.g., $\mathrm{CNOT}_{S\to A}$) satisfying
\begin{equation}
    U^\dagger(X\otimes  I )U = X\otimes X,
    \qquad
    U^\dagger(Z\otimes  I )U = Z\otimes  I .
    \label{eq:CNOT-conjugation}
\end{equation}
Conjugate the short-time noise channel by $\hat{U}$ to produce jump operators $L_Z=Z\otimes I $ and $L_X=X\otimes X$ acting on $S\otimes A$. To first order in $t$, the three relevant branches are
\begin{equation}
    \ket{+}\ket{\uparrow},\qquad
    (Z\otimes I )\ket{+}\ket{\uparrow} = \ket{-}\ket{\uparrow},\qquad
    (X\otimes X)\ket{+}\ket{\uparrow} = \ket{+}\ket{\downarrow},
\end{equation}
which are mutually orthogonal.  Hence, local POVMs
$\{\dyad{+}_S,\, I -\dyad{+}_S\}$ and
$\{\dyad{\uparrow}_A,\, I -\dyad{\uparrow}_A\}$ yield
\begin{equation}
    \Pr(\text{$Z$-jump}) = \gamma_Z t,\qquad
    \Pr(\text{$X$-jump}) = \gamma_X t,
    \label{eq:ancilla-probs-leading}
\end{equation}
Thus, we optimally infer the two rates simultaneously at leading order in $t$; moreover, rapidly repeating the experiment $\nu$ times, for a fixed total duration $T=\nu t$, remains optimal for all times~\cite{Sekatski2022:Thermometry}. This provides an explicit ancilla-assisted achievability scheme for simultaneous estimation, consistent with prior work highlighting the role of ancilla-assisted entanglement in multi-parameter noise estimation~\cite{Chen2022:PauliChEst}.

\paragraph{Time/ancilla tradeoff.}
This single-qubit example illustrates that multi-parameter incompatibility can be resolved in more than one way.  A \emph{serial} (time-sharing) strategy divides the total interrogation time $T$ between two single-parameter optimal measurements, each tailored to one noise rate. Alternatively, the \emph{ancilla-assisted} strategy enlarges the Hilbert space so that the relevant single-jump branches become orthogonal, enabling simultaneous estimation of both rates over the interrogation time.  In the weak-noise regime, this achieves twice the precision as the corresponding single-parameter strategy over the same $T$, at the cost of one ancillary qubit and an entangling operation. Thus, compatibility is restored by trading additional quantum resources for time efficiency. This resource tradeoff is consistent with prior results~\cite{Chen2022:PauliChEst} showing that ancilla-assisted entanglement can enable more favorable resource scaling for simultaneous estimation of multiple noise parameters than any serial strategy without ancillas.

%----------------------------------------------------------------------
\subsection{Extension to many-body, collective dephasing}

We now explain how the same mechanism extends to a QSN of $K$ qubits. Let $J_\alpha \coloneqq \sum_{k=1}^K \sigma_k^\alpha$ be the collective spin operators. Consider two-parameter collective dephasing,
\begin{equation}
    \rho(t)\;\approx\;\rho_0
    + t\Big(
        \gamma_{J_z}\,\mathcal{D}[J_z](\rho_0)
        + \gamma_{J_x}\,\mathcal{D}[J_x](\rho_0)
    \Big),
    \label{eq:collective-JxJz-shorttime}
\end{equation}
which is the many-body analogue of Eq.~\eqref{eq:singlequbit-XZ-shorttime}.  For a pure probe $\rho_0=\dyad{\psi}$, the single-jump branch picture becomes
\begin{equation}
    \rho(t)\;\approx\;
    \bigl(1-p_z-p_x\bigr)\,\dyad{\psi}
    + p_z\,\frac{\dyad{e_z}}{\ip{e_z}{e_z}}
    + p_x\,\frac{\dyad{e_x}}{\ip{e_x}{e_x}}
    \;+\;O(t^2),
    \label{eq:collective-branch-mixture}
\end{equation}
with centered jump branches
\begin{equation}
    \ket{e_z} \coloneqq \Delta J_z\ket{\psi},\qquad
    \ket{e_x} \coloneqq \Delta J_x\ket{\psi},
\end{equation}
where $\Delta J_i \coloneqq J_i - \expval{J_i}$, and probabilities
\begin{equation}
    p_z=\gamma_{J_z}t \expval{\Delta J_z^2}{\psi}
    \qquad     p_x=\gamma_{J_x}t \expval{\Delta J_x^2}{\psi}.
    \label{eq:collective-pz-px}
\end{equation}
Thus, at leading order, the two rates again enter only through the two branch weights $p_z$ and $p_x$.

\paragraph{Single-parameter optimality and entanglement scaling.}
Before addressing the multi-parameter setting, we address entanglement advantage in the single-parameter scenario. Consider estimating $\gamma_{J_z}$ alone, with $\gamma_{J_x}=0$.  In the weak-noise regime, the collective fluctuation $\Delta J_z^2$ determine the jump probability, viz., $p_z=\gamma_{J_z}t\expval*{\Delta J_z^2}{\psi}$.  For product probes, $\expval*{\Delta J_z^2}=\order{K}$, revealing shot-noise scaling. By contrast, for a suitably chosen many-body entangled probe (e.g., GHZ-type state), we have $\expval*{\Delta J_z^2}=\order{K^2}$, revealing Heisenberg scaling.  In this single-parameter setting, the POVM $\{P_\psi \coloneqq \ketbra{\psi},I-P_\psi\}$ provides a constructive achievability scheme in the weak-noise regime.  A similar discussion applies to estimating $\gamma_{J_x}$ alone.  Thus, entanglement between the probe spins is advantageous and essential for achieving Heisenberg scaling in collective-noise estimation.

\paragraph{Measurement compatibility for multi-parameter estimation.}
As in the single-qubit case, a measurement that is optimal for estimating $\gamma_{J_z}$ when $\gamma_{J_x}=0$ need not remain optimal when both rates are present.  For example, the POVM $\{ M_\psi, I - M_\psi\}$ with $ M_\psi=\dyad{\psi}$ detects whether \emph{any} jump occurred but does not distinguish between $J_z$ and $J_x$ jumps.  Indeed,
\begin{equation}
    \Tr{(I-M_\psi)\rho(t)}\approx p_z+p_x,
\end{equation}
which depends only on the sum of rates weighted by the corresponding variances in Eq.~\eqref{eq:collective-pz-px}.  Hence, separating $\gamma_{J_z}$ and $\gamma_{J_x}$ generally requires either a larger POVM that resolves distinct jump subspaces, multiple experimental settings, or additional degrees of freedom (entangled ancillae) that render the branches distinguishable in an enlarged Hilbert space.

\paragraph{Collective ancilla-assisted branch separation.}
The obstruction above has the same origin as in the single-qubit case: in the weak-noise regime, simultaneous estimation of $\gamma_{J_z}$ and $\gamma_{J_x}$ requires distinguishing the three branches $\{\ket{\psi},\,J_z\ket{\psi},\,J_x\ket{\psi}\}$.  For sensor-only probes, uncertainty relations of fluctuations $\expval*{\Delta J_x^2}{\psi}$ and $\expval*{\Delta J_z^2}{\psi}$ of the non-commuting operators indicate that optimality between the two rates cannot be achieved simultaneously. As such, measurements acting solely on the QSN cannot, in general, separate these branches without sacrificing sensitivity. 

Enlarging the Hilbert space with ancillae lifts this obstruction by providing extra degrees of freedom that render the collective jump branches orthogonal while preserving large collective fluctuations for \emph{each} noise process, thereby enabling simultaneous estimation at the Heisenberg limit. For this purpose, we introduce an ancilla register of dimension $K+1$ (e.g., corresponding to $\lceil\log_2(K+1)\rceil$ ancilla qubits) and consider preparing the \emph{Dicke-Choi probe}~\footnote{We introduce this nomenclature because $\ket{\Phi_{\rm sym}}$ is the Choi representation of the projection onto the maximally symmetric subspace spanned by Dicke states, $P_{\rm sym}\propto \sum_{n=0}^K \dyad*{D_n^{(K)}}$.}
\begin{equation}\label{eq:dickechoi}
    \ket{\Phi_{\mathrm{sym}}}
    =\frac{1}{\sqrt{K+1}}
    \sum_{n=0}^{K}\ket*{D^{(K)}_{n}}_S\ket{n}_A,
\end{equation}
where $\ket*{D^{(K)}_{n}}$ denotes the $K$-qubit Dicke state with $n$ excitations~\cite{Toth2012:dicke, Pezze2018RMP_atomic} and $\ket{n}$ represents the ancilla register~\footnote{The Dicke-Choi probe~\eqref{eq:dickechoi} has somewhat similar properties to the (ancilla-free) quantum compass states considered in Ref.~\cite{Vasilyev2024:QuCompass} for phase estimation.}. We state several useful facts about Dicke states:
\begin{itemize}
    \item The Dicke state $\ket*{D^{(K)}_{n}}$ is the equal superposition of all $K$-qubit computational basis states with exactly $n$ excitations (i.e., Hamming weight $n$).
    \item Dicke states coincide with the collective spin eigenstates $\ket*{J=K/2, m=K/2-n}$, such that $J_z\ket*{D_n^{(K)}}=(K-2n)\ket*{D_n^{(K)}}$ in our convention $J_z=\sum_{k=1}^K \sigma_k^{z}$.
    \item The Dicke states $\{\ket*{D^{(K)}_{n}}\}_{n=0}^K$ form an orthonormal basis of the fully symmetric subspace of dimension $K+1$. Consequently, the Dicke-Choi probe $\ket{\Phi_{\rm sym}}$ is a purification of the symmetric subspace projector, $\Tr_A{\dyad{\Phi_{\rm sym}}}=P_{\rm sym}/(K+1)$.
    \item High-weight Dicke states can, in principle, be prepared efficiently using collective control protocols~\cite{Yu2025:DickePrep}.
\end{itemize}

For the Dicke-Choi probe~\eqref{eq:dickechoi}, the reduced state on the QSN is maximally mixed over the Dicke ladder (i.e., equivalent to the normalized projector on the symmetric subspace) implying large and isotropic fluctuations,
\begin{equation}
    \expval{\Delta J_x^2}{\Phi_{\mathrm{sym}}}
    =\expval{\Delta J_z^2}{\Phi_{\mathrm{sym}}}
    =\frac{K(K+2)}{3}.
\end{equation}
Furthermore, the collective jump branches $\ket{e_\alpha}=(\Delta J_\alpha\otimes I)\ket{\Phi_{\mathrm{sym}}}$ are orthogonal $\ip{e_x}{e_z}\propto \Tr{P_{\rm sym}\Delta J_x\Delta J_z}=0$. Consequently, the Fisher information matrix of this measurement scheme is diagonal with entries $\mathcal{F}(\gamma_{J_\alpha})
\approx t\,\expval{\Delta J_\alpha^2}{\Phi_{\mathrm{sym}}}/\gamma_{J_\alpha}
= \order{K^2}$, which follows directly from the branch probabilities, $p_\alpha \approx \gamma_\alpha t \expval{\Delta J_\alpha^2}{\Phi_{\mathrm{sym}}}$. 

This measurement strategy saturates the quantum Cram\'er-Rao bound to leading order in $t$ and also connects naturally to our general QFI framework.  For independent collective dephasing rates, the noise covariance is
\begin{equation}
    \bm V_{\alpha\beta} = \gamma_{J_\alpha}t\,\delta_{\alpha\beta},
\end{equation}
and the Hamiltonian (generator) matrix is
\begin{equation}
    \bm{\mathcal{H}}_{\alpha\beta}
    = \expval{\Delta J_\alpha\,\Delta J_\beta}{\Phi_{\mathrm{sym}}}.
\end{equation}
Using our main result $\sum_{\alpha\beta}(\bm{\mathcal{F}}_Q)_{\alpha\beta}\vartheta_\alpha\vartheta_\beta\approx 4\Tr{\bm V\bm{\mathcal{H}}}$ [Eq.~\eqref{appeq:Vqfim}] for parametrization $\vartheta_\alpha=\sqrt{\gamma_{J_\alpha}}$, we deduce that the QFI matrix for $\{\vartheta_\alpha\}$ is diagonal with entries $\mathcal{F}_Q(\vartheta_\alpha=\sqrt{\gamma_{J_\alpha}})\approx 4t \bm{\mathcal{H}}_{\alpha\alpha}$. Whence by the parameter change rule, 
\begin{equation}
 \mathcal{F}_Q(\gamma_{J_\alpha})
    \approx \frac{t\,\bm{\mathcal{H}}_{\alpha\alpha}}{\gamma_{J_\alpha}}
    = \frac{t\,\expval{\Delta J_\alpha^2}{\Phi_{\mathrm{sym}}}}{\gamma_{J_\alpha}},
\end{equation}
in agreement with the classical Fisher-information above. Repeating this short-time experiment $\nu$ times, for a total duration $T=\nu t$, yields the accumulated QFI $\mathcal{F}_Q = \order{T K^2}$. Thus, an ancilla register of dimension $K+1$ (e.g., using $\lceil\log_2(K+1)\rceil$ ancilla qubits) suffices to render the collective jump branches distinguishable, achieving the quantum Cram\'er-Rao bound and yielding Heisenberg scaling for both rates in the weak-noise regime.

\end{document}